\documentclass[preprint,12pt]{elsarticle}

\if()
\usepackage[paperwidth=9.2in,paperheight=11.7in]{geometry}
\usepackage{authblk}
\usepackage{algorithm}
\usepackage{algorithmic}
\usepackage{subfigure}
\usepackage{type1cm}
\usepackage[dvipdfmx]{graphicx}
\usepackage{wrapfig}
\usepackage{latexsym}
\usepackage{amsmath}
\usepackage{algorithm}
\usepackage{algorithmic}
\usepackage{multicol}
\fi
 
\usepackage{multicol}
\usepackage{multirow}

\usepackage{amssymb}
\usepackage{graphicx}
\usepackage{wrapfig}
\usepackage{latexsym}
\usepackage{amsmath}
\usepackage[linesnumbered,ruled,vlined]{algorithm2e}
\DontPrintSemicolon
\SetKwInput{Main}{Main Routine of Agent $a_i$}
\SetKwInput{MainMinor}{Main Routine of Agent $a_{i'}$}
\SetKwInput{ProcedureCheck}{{\bf Procedure} \textit{Check}()}
		
\usepackage{multicol}
\usepackage{amsthm}
\usepackage{float}
\usepackage{type1cm} 
\usepackage{subfigure}

\newtheorem{theorem}{Theorem}
\newtheorem{lemma}{Lemma}
\newtheorem{definition}{Definition}

\journal{Theoretical Computer Science}

\setlength\textfloatsep{7.5pt} 
\setlength\abovecaptionskip{-2pt}

\pagestyle{plain}

\setcounter{page}{1}
\setlength{\evensidemargin}{\oddsidemargin}
 
\begin{document}
	
\begin{frontmatter}
		
\title{Partial gathering of mobile agents in dynamic rings\tnoteref{t1,t2}}
\tnotetext[t1]{The conference version of this paper is published in the proceedings of 
22nd International Symposium on Stabilization, Safety, and Security of Distributed Systems
(SSS 2021).}
\tnotetext[t2]{This work was partially supported by JSPS KAKENHI Grant Number 18K18029, 18K18031, 20H04140, 20KK0232, 21K17706, and 22K11971;
	the Hibi Science Foundation; and Foundation of Public Interest of Tatematsu.}

\author[a]{Masahiro~Shibata\corref{cor1}}
\ead{shibata@csn.kyutech.ac.jp}
\author[b]{Yuichi~Sudo}
\ead{sudo@hosei.ac.jp}
\author[c]{Junya~Nakamura}
\ead{junya@imc.tut.ac.jp}
\author[d]{Yonghwan~Kim}
\ead{kim@nitech.ac.jp}

\cortext[cor1]{Corresponding author. Tel.:+81-9-4829-7656.}

\address[a]{Graduate school of Computer Science and Systems Engineering, 
	Kyushu Institute of Technology\\
	680-4 Kawazu, Iizuka, Fukuoka 820-8502, Japan}
\address[b]{Graduate School of Computer and Information Sciences, Hosei University\\ 
	3-7-2 Kajino-cho, Koganei-shi, Tokyo, 184-8584, Japan}
\address[c]{Information and Media Center, Toyohashi University of Technology\\ 
		1-1 Hibarigaoka, Tempaku-cho, Toyohashi, Aichi, 441-8580, Japan}
\address[d]{Graduate School of Computer Science and Engineering, Nagoya Institute of Technology\\ 
Gokiso-cho, Showa-ku, Nagoya, Aichi, 466-8555 Japan}

\begin{abstract}
	
In this paper, 
we consider the partial gathering problem of mobile agents in synchronous dynamic bidirectional ring networks.
The partial gathering problem is a  generalization
of the (well-investigated) total gathering problem, 
which requires that all $k$ agents distributed in the network 
terminate at a non-predetermined single node.
The partial gathering problem  requires, 
for a given positive integer $g\,(< k)$,  that 
agents terminate in a configuration such that 
either at least $g$ agents or no agent exists at each node.
When $k\ge 2g$,
the requirement for the partial gathering problem is strictly weaker   
than that for the  total gathering problem, and thus
it is interesting to clarify the difference in the move complexity between them.
So far, the partial gathering problem 
has been considered in static graphs.
In this paper, we start considering partial gathering in dynamic graphs. 
As a first step, we consider this problem in 1-interval connected rings, that is, 
one of the links in a ring may be missing at each time step.  
In such networks, 
focusing on the relationship between the values of $k$ and $g$,
we fully characterize the solvability of the partial gathering problem
and analyze the move complexity of the proposed algorithms when the problem can be solved. 	
First, we show that the $g$-partial gathering problem is unsolvable when $k \le 2g$.
Second, we show that
the problem can be solved with
$O(n\log g)$ time and the total number of $O(gn\log g)$ moves when $2g+1\le k \le 3g-2$.
Third, we show that
the problem can be solved with
$O(n)$ time and the total number of $O(kn)$ moves  when $3g-1\le k \le 8g-4$.
Notice that since $k = O(g)$ holds when $3g-1 \le k \le 8g-4$, 
the move complexity $O(kn)$ in this case can be represented also as $O(gn)$.
Finally, we show that
the problem can be solved with
$O(n)$ time and the total number of $O(gn)$ moves when $k\ge 8g-3$.
These results mean that the partial gathering problem can be solved also in dynamic rings
when $k\ge 2g+1$.
In addition, 
agents require a total number of $\Omega(gn)$ (resp., $\Omega(kn)$) moves 
to solve the partial (resp., total) gathering problem.
Thus, 
when $k\ge 3g-1$, agents can solve the partial gathering problem with the asymptotically optimal total number of 
$O(gn)$ moves, which is strictly smaller than that for the total gathering problem.	\\

	\begin{keyword}
		distributed system, mobile agent, partial gathering problem, dynamic ring
	\end{keyword}
\end{abstract}

\end{frontmatter}  

\section{Introduction}

\subsection{Background and Related Work}
A distributed system
comprises a set of computing entities ({\em nodes}) connected by communication links.
As a promising design paradigm of distributed systems, 
(mobile) agents have attracted much
attention 
\cite{system,motivation1}.
The agents can 
traverse the system, carrying information collected at visited nodes,
and execute an action at each node using the information
to achieve a task.
In other words, agents can
encapsulate the process code and data, which simplifies the design of distributed systems 
\cite{good,motivation2}.

The {\em total gathering problem} (or the rendezvous problem)
is a fundamental problem for agents' coordination.
When a set of $k$ agents are  arbitrarily placed at nodes,
this  problem requires that all the $k$ agents 
terminate  at a non-predetermined single node. 
By meeting at a single node, 
all agents can share information or synchronize their behaviors.
The total gathering problem has been considered in various kinds of networks
such as rings \cite{gatheringBook,token1,token2,Mr.Kawai},
trees \cite{littleMemory,Tree3}, tori \cite{token3},  
and arbitrary networks \cite{rvArbitrary,rvPolynominal}.

Recently, a variant of the total gathering problem, 
called the {\em $g$-partial gathering problem} \cite{PartialRing}, has been considered.
This problem does not require all agents to meet at a single node, 
but allows agents to meet at several nodes separately.  
Concretely, for a given positive integer  $g\,(< k)$, 
this problem requires that agents terminate in a configuration such that 
either at least $g$ agents or no agent exists at each node.
From a practical point of view, 
the $g$-partial gathering problem is still useful especially in large-scale networks.
That is, 
when $g$-partial gathering is achieved, 
agents are partitioned into groups each of which has at least $g$ agents,
each agent can share information and tasks with agents in the same group,
and each group can partition the network and 
then patrol its area that it should monitor efficiently. 
\if()
each group with at least $g$ agents can partition the network and 
then patrol its area that they should monitor efficient
each agent can share information and tasks with at least $g-1$ agents staying at the same node.
In addition, while in total  gathering all the agents meet at a single node, 
in $g$-partial gathering agents meet at multiple nodes separately. 
This means that, after achieving $g$-partial gathering,
each group with at least $g$ agents can partition the network and 
then patrol its area that they should monitor efficiently. 
\fi
The $g$-partial gathering problem is interesting 
also from a theoretical point of view.
Clearly, if $k <2g$ holds, 
the $g$-partial gathering problem is equivalent to the total  gathering problem. 
On the other hand, 
if $k\ge 2g$ holds, 
the requirement for the $g$-partial gathering problem is strictly weaker than 
that for the total gathering problem. 
Thus,  there exists possibility that 
the $g$-partial gathering problem can be solved with strictly smaller total number of moves 
(i.e., lower costs)
compared to the total gathering problem.

As related work, 
Shibata et al. considered the $g$-partial gathering problem in rings \cite{PartialRing,Kawata,partialSebastien},
trees \cite{PartialTree}, and arbitrary networks \cite{partialArbitrary}.
In \cite{PartialRing,Kawata}, they considered it 
in unidirectional ring networks with whiteboards (or memory spaces
that agents can read and write) at nodes.
They mainly showed that, if agents have distinct IDs and the algorithm is deterministic,  
or if agents do not have distinct IDs
and the algorithm is randomized, 
agents can achieve $g$-partial gathering with the total number of $O(gn)$ moves (in expectation),
where $n$ is the number of nodes. 
Notice that in the above results 
agents do not have any global knowledge such as $n$ or $k$.
In \cite{partialSebastien}, they considered $g$-partial gathering for another mobile entity called \textit{mobile robots}
that have no memory but can observe all nodes and robots in the network. 
In the case of using mobile robots, 
they also showed that $g$-partial gathering can be achieved with the total number of $O(gn)$ moves. 
\if()
considered two problem settings about agents:
distinct agents (i.e., agents with distinct IDs) and 
anonymous agents (i.e., agents without IDs) with knowledge of $k$. 
For distinct agents, 
they gave a deterministic algorithm to solve
the $g$-partial gathering problem in $O(gn)$ total number of moves,
where $n$ is the number of nodes.
For anonymous agents with knowledge of $k$,
they considered deterministic and randomized cases.
In the deterministic case, they showed that 
there exist unsolvable initial configurations.
In addition, they gave an algorithm to solve the problem 
from any  solvable initial configuration 
in $O(kn)$ total number of moves.
In the randomized case, 
they gave an algorithm to solve the problem  
in $O(gn)$ expected total number of moves.
\fi
In addition, the $g$-partial (resp., the total) gathering problem in ring networks 
requires a total number of $\Omega (gn)$ (resp., $\Omega (kn)$) moves.
Thus, the above results are asymptotically optimal in terms of the total number of moves, 
and the 
number $O(gn)$ 
is strictly smaller than that for the total gathering problem when $g=o(k)$. 
In tree and arbitrary networks, 
they also proposed algorithms to solve the $g$-partial gathering problem 
with strictly smaller total number of moves compared to 
the total gathering problem for some settings.

While all the above work on the total gathering problem and the $g$-partial gathering problem 
are considered in \textit{static graphs} where a network topology does not  change during an execution,
recently many problems involving agents have been studied 
in \textit{dynamic graphs}, where  a topology changes during an execution.
For example, 
the total gathering problem \cite{gatheringDynamicRing},
the exploration problem \cite{explorationDynamicRing,explorationDynamicTori}, 
the compact configuration problem \cite{compacting},
the patrolling problem \cite{patrollingDynamicRing}, 
and the uniform deployment problem \cite{UniformDynamicRing}
are considered in dynamic graphs.
In \cite{gatheringDynamicRing}, Luna et al. considered the total  gathering problem in 
\textit{1-interval connected rings}, that is, one of the links in a ring may be missing at each time step.
In such networks, they considered the trade-off between the existence of agent ability (chirality and cross detection) and the algorithm performances. 
However, to the best of our knowledge,  
there is no work 
for  $g$-partial gathering in dynamic graphs.

\begin{figure}[!t] 
	\centering 
	\includegraphics[scale=0.5]{./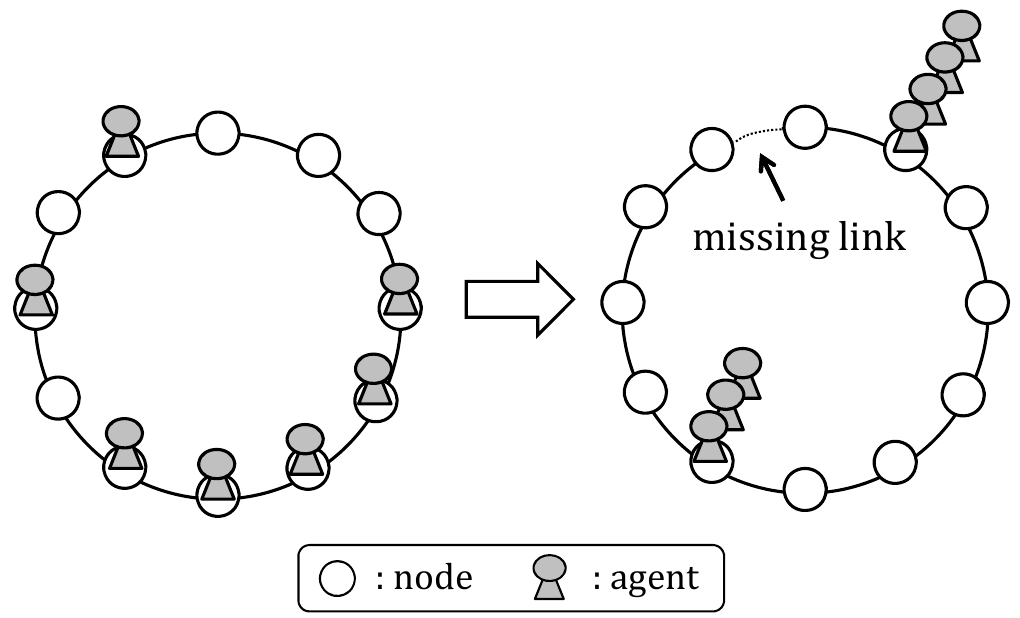} 
	\caption{An example of the $g$-partial gathering problem in a dynamic ring ($g=3$).}
	\label{fig:example}
\end{figure} 
 
\subsection{Our Contribution}
\label{sec:contribution} 
In this paper,
we consider the $g$-partial gathering problem of mobile agents in 1-interval connected  rings \cite{gatheringDynamicRing,explorationDynamicRing,compacting,UniformDynamicRing}.
An example is given in Fig.\,\ref{fig:example}.
We assume that the ring is bidirectional and each node has a whiteboard. 
In addition, we assume that agents have distinct IDs, chirality,  knowledge of $n$ and $k$, and behave fully synchronously.
In such settings, 
focusing on the relationship between the values of $k$ and $g$,
we fully characterize the solvability of the $g$-partial gathering problem
and analyze the time and move complexities of the proposed algorithms when the problem can be solved.

We summarize our results and the existing result \cite{PartialRing} for agents with distinct IDs in static rings in Table \ref{table:result}.
First, we show that the $g$-partial gathering problem is unsolvable when $k \le 2g$.
Second, we show that
the problem can be solved with
$O(n\log g)$ time and the total number of $O(gn\log g)$ moves when $2g+1\le k \le 3g-2$.
Third, we show that
the problem can be solved with
$O(n)$ time and the total number of $O(kn)$ moves  when $3g-1\le k \le 8g-4$.
Notice that since $k = O(g)$ holds when $3g-1 \le k \le 8g-4$, 
the move complexity $O(kn)$ in this case can be represented also as $O(gn)$.
Finally, we show that
the problem can be solved with
$O(n)$ time and the total number of $O(gn)$ moves when $k\ge 8g-3$.
These results mean that the $g$-partial gathering problem can be solved also in dynamic rings  when $k\ge 2g+1$.
In particular, when $k\ge 3g-1$, the time complexity $O(n)$ of the proposed algorithms is asymptotically optimal and 
the total number of $O(gn)$ moves is also asymptotically optimal, 
which is strictly smaller than the move complexity for the total gathering problem.
Moreover, it is worthwhile to mention that,
total gathering, which is equivalent to $g$-partial gathering with $k<2g$, 
cannot be solved in 1-interval connected rings~\cite{gatheringDynamicRing}.
Thus, Luna et al.~\cite{gatheringDynamicRing} solve the total gathering problem
by relaxing the problem specification:
they regard that gathering is achieved when every agent stays at one of the two neighboring nodes.
On the other hand, $g$-partial gathering with $k>2g$, 
which we address in this paper, does not require such relaxation.

\begin{table*}[t!]
	\centering 
	\newlength{\myheight}
	\setlength{\myheight}{0.8mm}
	\caption{Results of $g$-partial gathering for agents with distinct IDs in ring networks for the case of $k\ge 2g$ ($n$: \#nodes, $k$: \#agents).}
	\label{table:result}
	\scriptsize

	\begin{tabular}{|c||c|c|c|c|c|}
		\hline 
		
		\rule{0cm}{\myheight} &\multirow{2}{*}{Result in \cite{PartialRing}} & 
		\multicolumn{4}{|c|}{Results of this paper}\\  \cline{3-6} 
		\rule{0cm}{\myheight} & &Result 1 (Sec. \ref{sec:2g}) & Result 2 (Sec. \ref{sec:more2g})& Result 3 (Sec. \ref{sec:more3g}) & Result 4 (Sec. \ref{sec:more8g})\\		
		\hline
		\hline
		\rule{0cm}{\myheight} Static/Dynamic ring &  Static & Dynamic & Dynamic & Dynamic & Dynamic\\ 
		\hline
		\rule{0cm}{\myheight} Knowledge of $n$ and $k$ & No &Available& Available & Available & Available   \\ 
		\hline	
		\rule{0cm}{\myheight} Relation between $k$ and $g$ & No condition   &$k \le 2g$& $2g+1\le k \le 3g-2$   & $3g-1\le k\le 8g-4$ & $k\ge 8g-3$ \\
		\hline		
		\rule{0cm}{\myheight} Solvable/Unsolvable
		& Solvable & Unsolvable & Solvable & Solvable & Solvable\\
		
		\hline		
		\rule{0cm}{\myheight} Time complexity 
		&$\Theta (n)$& - & $O(n\log g)$ & $\Theta(n)$ & $\Theta (n)$ \\
		\hline
		\rule{0cm}{\myheight} Agent moves  
		&$\Theta (gn)$ & - & $O(gn\log g)$ &$O(kn) (= O(gn))$ & $\Theta (gn)$ \\
		\hline
	\end{tabular}
	\footnotesize
\end{table*}

\section{Preliminaries}
\label{model}

\subsection{System Model}
\label{sec:model}
We basically follow the model 
defined in 
\cite{gatheringDynamicRing}.
A {\em dynamic bidirectional ring} $R$ is defined as 2-tuple $R=(V,E)$, 
where $V=\{v_{0},v_{1}, \ldots ,v_{n-1}\}$ is a set of $n$ nodes 
and $E=\{e_{0},e_{1},\ldots ,e_{n-1}\}$ $(e_{i}$=$\{v_{i}$, $v_{(i+1) \bmod n}\})$ is a set of links. 
For simplicity, 
we denote $v_{(i+j)\mod n}$ (resp., $e_{(i+j)\mod n}$) by 
$v_{i+j}$ (resp., $e_{i+j}$). 
We define the direction from $v_i$ to $v_{i+1}$ (resp., $v_i$ to $v_{i-1}$)
as the {\em forward} or \textit{clockwise}
(resp., \textit{backward} or \textit{counterclockwise}) direction.
In addition,
one of  the links in a  ring may be missing at each time step, and
the missing link is chosen by an \textit{adversarial scheduler}.
Such a dynamic ring is known as a \textit{1-interval connected ring}. 
The {\em  distance} from node $v_{i}$ to $v_{j}$ is defined to be 
$\min\{(j-i)\mod n, (i-j)\mod n\}$. 
Notice that this definition of the distance is correct when 
no link from $v_i$ to $v_j$ (resp., $v_j$ to $v_i$) is missing
when the distance is $(j-i) \mod n$ (resp., $(i-j) \mod n$).
Moreover, we assume that nodes are anonymous, i.e., they do not have IDs. 
Every node $v_i\in V$ has a whiteboard that  
agents at node $v_i$ can read from and write on. 

Let $A=\{a_{0},a_{1},\ldots$ $,a_{k-1}\}$ be a set of $k\,(\leq n)$ agents.
Agents can move through directed links, that is, 
they can move from $v_i$ to $v_{i+1}$ (i.e., move forward) 
or from $v_i$ to $v_{i-1}$ (i.e., move backward) for any $i$.
Agents have distinct IDs and knowledge of $n$ and $k$.
Agents have \textit{chirality}, that is, they agree on the orientation of clockwise and counterclockwise
direction in the ring.
In addition, agents cannot detect
whether other agents exist at the current node or not.
An agent $a_i$ is defined as a deterministic finite automaton
$(S$, $W$, $\delta$, $s_{\textit{initial}}$, $s_{\textit{final}}$, 
$w_\textit{initial}$, $w'_\textit{initial})$. 
The first element $S$ is the set of all states of an agent, including two special states,
initial state $s_{\textit{initial}}$ and final state $s_{\textit{final}}$.
The second element $W$ is the set of all states (contents) of a whiteboard, 
including two special initial states $w_\textit{initial}$ and $w'_\textit{initial}$.
We explain $w_\textit{initial}$ and $w'_\textit{initial}$ in the next paragraph.
The third element $\delta: S\times W\mapsto S\times W\times M$ is the state transition function
that decides, from the current state of $a_i$ and the current node's whiteboard,
the next states of $a_i$ and the whiteboard,
and whether $a_i$ moves to its neighboring node or not.
The last element  $M=\{-1,0,1\}$ in $\delta$ 
represents whether $a_i$ makes a movement or not. 
The value $1$  (resp., $-1$) means moving forward (resp., backward) and 0 means staying at the current node. 
We assume that $\delta \,(s_\textit{final}, w_j)=(s_\textit{final},w_j,0)$ holds
for any state $w_j\in W$, which means that 
$a_i$ never changes its state, updates the contents of a whiteboard, or 
leaves the current node once it reaches state $s_\textit{final}$.
We say that an agent \textit{terminates} when its state changes to $s_\textit{final}$.
Notice that $S, \delta, s_\textit{initial}$, and $s_\textit{final}$ can be dependent on the agent's ID.


In an agent system, (global) {\em configuration} $c$ is defined as a product of
the states of all  agents, the states (whiteboards' contents) of all  nodes, and 
the locations (i.e., the current nodes) of all  agents.
\if()
In an agent system, a (global) {\em configuration} is defined as the Cartesian product 
$\cal{S} \times \cal{W} \times  \cal {L}$, where $\cal{S}\in$ $S^k$ represents the states of all agents,
$\cal{W}\in$ $W^n$ represents the states (whiteboard' contents) of all nodes, and
$\cal{L}\in$ $\{0,1,\ldots n-1\}^k$ represents the current locations of agents. 
The locations of agents $\cal{L}$ =$(l_0, l_1, \ldots l_{k-1})$ implies that each
agent $a_i$ is located at node $v_{l_i}$.
\fi
We define $C$ as a set of all configurations. 
In an initial configuration $c_0\in C$, we assume that 
agents are deployed arbitrarily at mutually distinct nodes (or no two agents start at the same node), and
the state of each whiteboard is $w_\textit{initial}$ or $w'_\textit{initial}$
depending on the existence of an agent.
That is, when an agent exists at node $v$ in the initial configuration,
the initial state of $v'$s whiteboard is $w_\textit{initial}$.
Otherwise, the state is $w'_\textit{initial}$.

During an execution of the algorithm, 
we assume that agents move instantaneously, that is, 
agents always exist at nodes (do not exist on links). 
Each agent  executes the following four operations in an \textit{atomic action}:
1) reads the contents of  its current  node's whiteboard, 
2) executes local computation (or changes its state), 
3) updates the contents of  the current node's whiteboard, and 
4) moves to its neighboring node or stays at the current node.
If several agents exist at the same node,
they take atomic actions interleavingly in an arbitrary order.
In addition, when an agent tries to move to its neighboring node (e.g., from node $v_j$ to $v_{j+1}$) 
but the corresponding link (e.g., link $e_j$) is missing, we say that 
the agent is \textit{blocked},
and it still exists at $v_j$ at the beginning of the next atomic action.

In this paper, we consider a \textit{synchronous execution}, that is, 
in each time step called \textit{round},
all agents perform 
atomic actions.
Then, an \textit{execution} 
starting 
from $c_0$ is defined as $E = c_0,c_1, \ldots$
where each $c_i\ (i \ge 1)$ is the configuration reached from $c_{i-1}$
by atomic actions of all agents.
An execution is infinite, or ends in 
a \textit{ final configuration}  where
the state of every agent is $s_\textit{final}$.

\subsection{The Partial Gathering Problem}

The requirement for the partial gathering problem is that, 
for a given integer $g$, 
agents terminate in a configuration such that 
either at least $g$ agents or no agent exists at each node.
Formally, we define the $g$-partial gathering problem as follows.

\begin{definition}\label{teigi:partial}
	An algorithm solves the $g$-partial gathering problem in dynamic rings when the following conditions hold:
	\begin{itemize}
		\item Execution $E$ is finite (i.e., all agents terminate in state $s_\textit{final}$).
		\item In the final configuration,
		at least $g$ agents exist at any node where an agent exists.
	\end{itemize}
\end{definition}

In this paper, we evaluate the proposed algorithms by  the time complexity 
(the number of rounds required for agents to solve the problem)
and the total number of agent moves.
In \cite{PartialRing}, the lower bound on the total number of agent moves for static rings is shown to be $\Omega (gn)$.
This  theorem clearly holds also in dynamic rings.

\begin{theorem}
	\label{lower}
	A lower bound on the total number of agent moves required to solve the $g$-partial gathering problem 
	in dynamic rings is  $\Omega (gn)$ if $g\ge 2$. 
\end{theorem}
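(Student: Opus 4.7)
The plan is to obtain the dynamic-ring lower bound essentially for free from the existing static-ring bound in \cite{PartialRing} by a simple reduction argument. The idea is that the dynamic-ring model used in this paper strictly generalizes the static-ring model: when the adversarial scheduler chooses to never remove any link, the 1-interval connected ring behaves exactly as a static ring. Consequently, any algorithm that solves $g$-partial gathering in dynamic rings must, in particular, solve it on such a "degenerate" schedule, and its worst-case move complexity can be no smaller than the worst-case move complexity of any algorithm on a static ring.

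Concretely, first I would restate the static-ring lower bound from \cite{PartialRing}: there exists an initial placement of $k$ agents on an $n$-node ring such that any algorithm solving $g$-partial gathering from this placement incurs $\Omega(gn)$ total moves (provided $g \geq 2$). Second, I would fix an arbitrary algorithm $\mathcal{A}$ claimed to solve $g$-partial gathering in dynamic rings, and instantiate the adversary to be the "do nothing" adversary that never removes any link. Under this adversary, every round in the execution of $\mathcal{A}$ proceeds exactly as a round on the corresponding static ring: no agent is ever blocked, and every atomic action has the same effect as in the static model. Hence the execution of $\mathcal{A}$ on the dynamic ring coincides with an execution of $\mathcal{A}$ viewed as a static-ring algorithm. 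Third, I would feed $\mathcal{A}$ the hard initial configuration from \cite{PartialRing}, conclude that its execution performs $\Omega(gn)$ moves in that run, and therefore that its worst-case move complexity in dynamic rings is also $\Omega(gn)$.

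The only subtlety worth spelling out, and the closest thing to an obstacle, is to verify that the model definitions in Section \ref{sec:model} indeed permit the adversary to leave all links present in every round, and that the synchronous, whiteboard, distinct-ID, chirality, and knowledge-of-$n,k$ assumptions match (or are no weaker than) those used for the static-ring lower bound in \cite{PartialRing}. Since the lower bound of \cite{PartialRing} is proved in a model where agents have all of these abilities, granting the same abilities in the dynamic model cannot make the problem easier, and so the bound transfers. No new combinatorial argument, counting bound, or adversarial link-removal construction is needed; the proof is a one-line reduction that I would write out formally in two or three sentences.
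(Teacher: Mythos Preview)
Your proposal is correct and takes essentially the same approach as the paper: the paper's entire justification is the sentence ``This theorem clearly holds also in dynamic rings,'' invoking the static-ring bound from \cite{PartialRing}, which is precisely the reduction you spell out in detail. Your version is in fact more careful than the paper's, since you explicitly note that the adversary may choose to remove no link and that the capability assumptions are compatible.
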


On the time complexity, the following theorem holds.
\if()
Intuitively, this is because
there exist an initial configuration and link-missings such that 
the distance between some agent $a_i$ and its nearest agent is $\Omega(n)$, 
which requires $\Omega(n)$ rounds for $a_i$ to meet with other agents.
\fi

\begin{theorem}
	\label{theo:time}
	A lower bound on the time complexity required to solve the $g$-partial gathering problem 
	in dynamic  rings is  $\Omega (n)$. 
\end{theorem}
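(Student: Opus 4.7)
The plan is to exhibit a single family of initial configurations on which any correct algorithm must run for $\Omega(n)$ rounds, even without the adversary ever removing a link. First I would place agent $a_0$ alone at $v_0$ and deploy the remaining $k-1$ agents as a contiguous cluster starting at $v_{\lfloor n/2 \rfloor}$, assuming a parameter regime (for instance $k \le \lceil n/2 \rceil$) in which this cluster stays bounded away from $v_0$ on both sides of the ring. This guarantees that $a_0$'s nearest other agent lies at ring-distance $\Omega(n)$.

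Next I would analyze the final configuration. By Definition~\ref{teigi:partial}, the node where $a_0$ terminates must host at least $g \ge 2$ agents, so some agent $a_j$ originally in the cluster must share $a_0$'s terminal node $v_\ell$. Then $a_0$ traverses at least $\mathrm{dist}(v_0, v_\ell)$ edges and $a_j$ traverses at least $\mathrm{dist}(v_p, v_\ell)$ edges, where $v_p$ is $a_j$'s initial cluster position. By the triangle inequality on the ring, $\max\{\mathrm{dist}(v_0,v_\ell),\mathrm{dist}(v_p,v_\ell)\} \ge \tfrac12 \mathrm{dist}(v_0, v_p) = \Omega(n)$. Since an atomic action advances an agent by at most one edge, the number of rounds is bounded below by this maximum, giving the claimed $\Omega(n)$ lower bound. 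Note that the dynamic aspect is not needed here: the static ring already forces $\Omega(n)$ time, and a static execution is a valid dynamic execution under an adversary that never removes a link, so the bound transfers immediately.

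The main obstacle I anticipate is handling the parameter regime in which $k$ is close to $n$, where the ``isolation'' construction above does not literally leave enough empty nodes to force an $\Omega(n)$ gap between $a_0$ and the cluster. To handle this I would either (i) carve out a single contiguous empty arc of length roughly $n/2$ and place $a_0$ at one endpoint with the other $k-1$ agents packed along the complementary arc, or (ii) exploit the constraint $g \le (k-1)/2$ coming from solvability, so that at least two terminal groups exist and at least one forced meeting point lies at distance $\Omega(n)$ from some initial position. Once this routine case analysis is pinned down, the remainder of the argument is just the movement-per-round bound from the previous paragraph.
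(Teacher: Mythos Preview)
Your argument is correct and close in spirit to the paper's, but the mechanism for creating the $\Omega(n)$ separation differs. The paper also isolates $a_0$ from a contiguous block of the remaining $k-1$ agents, but it places that block \emph{adjacent} to $a_0$ (at $v_{n-k+1},\dots,v_{n-1}$) and then has the adversary delete the link $e_{n-1}=\{v_{n-1},v_0\}$ for the entire execution. This effectively turns the ring into a path with $a_0$ at one end, so every other agent sits at path-distance at least $n-k+1$, and the bound is asserted directly (``$a_0$ requires $\Omega(n)$ rounds to meet with other agents'') without spelling out the metric/triangle-inequality step you give. Your version instead achieves the separation purely by placement and never invokes the adversary, which yields the slightly stronger observation that the $\Omega(n)$ bound already holds on static rings; the price is the extra explicit step bounding $\max\{\mathrm{dist}(v_0,v_\ell),\mathrm{dist}(v_p,v_\ell)\}$. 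As for the large-$k$ regime you flag as the main obstacle: the paper does not address it either---it simply assumes $k=o(n)$ in the first sentence of its proof---so the case analysis you anticipate in your final paragraph is not something the paper resolves.
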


\begin{proof}
	We assume that 
	$k < o(n)$ holds,  agent $a_0$ is placed at node $v_0$ and agents $a_1, a_2, \ldots , a_{k-1}$ are respectively placed at nodes 
	$v_{(n-1)-(k-1)+1}$, $v_{(n-1)-(k-1)+2}$, $\ldots$ , $v_{n-1}$ in the initial configuration, 
	and link $e_{n-1}$ (link connecting $v_{n-1}$ and $v_0$) is missing throughout the execution.
	Then, the nearest agent of $a_0$ is $a_1$ and the distance is $\Omega (n)$.
	Hence, $a_0$ requires at least $\Omega (n)$  rounds to meet with other agents, and the theorem follows.	
\end{proof}

\section{The case of $k \le 2g$}
\label{sec:2g}

In this section, we show the impossibility result in the case of $k \le 2g$.

\begin{theorem}
	\label{theo:2g}
	When $k \le 2g$ holds, 
	the $g$-partial gathering problem cannot be solved in dynamic rings. 
\end{theorem}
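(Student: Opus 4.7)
My plan is to prove Theorem~\ref{theo:2g} by contradiction, assuming that an algorithm $\mathcal{A}$ solves the $g$-partial gathering problem in 1-interval connected rings when $k \le 2g$, and splitting the argument into the two sub-cases $k < 2g$ and $k = 2g$.

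For $k < 2g$, the starting point is a counting observation: by Definition~\ref{teigi:partial}, every non-empty node of the final configuration hosts at least $g$ agents, so having two or more non-empty nodes would require at least $2g > k$ agents, which is impossible. All $k$ agents must thus terminate at a single node, meaning $\mathcal{A}$ in fact solves the total gathering problem in a 1-interval connected ring. This contradicts the known unsolvability of total gathering in such rings~\cite{gatheringDynamicRing}, yielding the desired contradiction.

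For $k = 2g$, the valid terminal configurations are either the single-node gathering or those consisting of exactly two groups of $g$ agents at two distinct nodes. To rule both out I would fix a specific initial configuration (for instance, placing the $k$ agents at consecutive nodes $v_0,\ldots,v_{k-1}$) and design an adversarial link-removal schedule. The first type of outcome is still defeated by the Luna-et-al.-style strategy inherited from the previous sub-case. To also defeat the second outcome, the adversary would identify at each round a ``critical'' link whose traversal would bring some prospective group to size exactly $g$ and remove that link. Persistently blocking such links would either force $\mathcal{A}$ into an infinite execution or leave some node hosting between $1$ and $g - 1$ agents at termination; both alternatives contradict the specification.

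The main obstacle I expect is in the $k = 2g$ sub-case: because $\mathcal{A}$ may adaptively alternate between the two legitimate targets as it reacts to blocked links, proving that a single-link-per-round budget suffices to foil both targets simultaneously is the heart of the argument. It likely requires either a delicate potential-function analysis or a careful case analysis showing that, no matter which target $\mathcal{A}$ is currently pursuing, one critical link removal is enough to delay it while preserving the adversary's ability to defeat the alternative target at subsequent rounds.
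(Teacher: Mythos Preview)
Your treatment of the sub-case $k<2g$ matches the paper exactly: both reduce to the known impossibility of total gathering in 1-interval connected rings.

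For $k=2g$, however, your proposal is not a proof but a research plan, and you yourself flag the gap: you do not explain how the adversary, with its one-link-per-round budget, simultaneously defeats both legitimate target shapes against an adaptive algorithm. Designing a global adversarial schedule from the initial configuration is exactly the hard route, and nothing in your outline shows how the potential-function or case analysis would close.

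The paper sidesteps this difficulty with a much simpler idea that you are missing: instead of building the adversary forward from round~$0$, it reasons backward from termination. Assuming a solving algorithm exists, let $c_t$ be the first configuration after which no agent ever moves again, and look at $c_{t-1}$. Now do a case split on the number of agent nodes in $c_{t-1}$. If there are exactly two, one holds $g+c$ agents and the other $g-c$ for some $c\ge 1$, and at least one agent must cross a link between them to reach a valid $c_t$; the adversary deletes that link. If there are three or more agent nodes, then at least two of them hold fewer than $g$ agents each, and reaching a valid $c_t$ in one step requires agent traffic across links incident to both of those nodes; the adversary deletes one such link. In either case $c_t$ cannot be valid, contradicting the assumption. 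This ``look at the last move'' trick collapses the whole adaptive-scheduling difficulty you identified into a two-case local argument.
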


\begin{proof}
	When $k<2g$ holds, 
	the $g$-partial gathering problem is equivalent to the total gathering problem,
	and it is shown that total gathering requiring all the $k$ agents to meet at the same node is impossible in 1-interval connected rings \cite{gatheringDynamicRing}.
	Hence, in the following we show the impossibility in the case $k=2g$.
	We show this by contradiction, that is, we assume that 
	there exists an algorithm $\cal{A}$ to solve the $g$-partial gathering problem with $k = 2g$ agents in dynamic rings.
	Let $c_t$ be the first  configuration from which no agent moves anymore, that is, 
	agents termite the algorithm execution in the final configuration $c_{t'}$ for some $t'\ge t$ and 
	the placement of agents in $c_{t''}$ for any $t''$~($t''\ge t$) achieves $g$-partial gathering. 
	In addition, let an \textit{agent node} be a node where an agent exists.
	In the following, we consider the cases where the number of agent nodes in $c_{t-1}$ is 
	(i) two and (ii) at least three and 
	we show that agents cannot achieve $g$-partial gathering from any case.  
	
	First, we consider the case that (i) there exist exactly two agent nodes in $c_{t-1}$.
	Let $v_a$ and $v_b$ be the agent nodes and 
	$k_a$ (resp., $k_b$) be the number of agents at $v_a$ (resp., $v_b$).
	Without loss of generality, in $c_{t-1}$, 
	we assume that $k_a>k_b$ holds and 
	there exist $g+c$ agents and $g-c$ agents at $v_a$ and $v_b$, respectively,  for some positive integer $c\,(1\le c\le g-1$).
	Then, since $k = 2g$ holds, 
	the $g$-partial gathering-achieved placement of agents in $c_t$ is either that 
	(a) all the $k$ agents gather at a single node or 
	(b) some $g$ agents gather at some node and the other $g~(=k-g)$ agents gather at another node. 
	To reach configuration (a) or (b) from $c_{t-1}$, 
	obviously at least $c$ agents in total need to move through a link between $v_a$ and $v_b$.
	However, the adversary can make the link between $v_a$ and $v_b$ be missing in $c_{t-1}$, which means that 
	$g$-partial gathering is not achieved in $c_t$.
	Hence, agents cannot achieve $g$-partial gathering in case (i).
	
	\if()
	in order achieve $g$-partial gathering in $c_t$, 
	it is necessary that (a) all agent gather at a single node or 
	(b) some $g$ agents gather at some node and the other $g$ agents gather at another node. 
	the number of agent nodes in $c_t$ is one or two.
	Then, in order to achieve $g$-partial gathering, 
	in $c_{t-1}$, obviously at least $c$ agents in total need to move through the link between $v_a$ and $v_b$.
	However, the adversary can make a link between $v_a$ and $v_b$ be missing in $c_{t-1}$, which means that 
	$g$-partial gathering is not achieved in $c_t$.
	Hence, agents cannot achieve $g$-partial gathering in case (i).
	\fi
	
	Next, we consider the case that (ii) there exist at least three agent nodes in $c_{t-1}$.
	Since $k = 2g$ holds, there exist at least two agent nodes $v_a^\textit{less}$ and $v_b^\textit{less}$ with less than $g$ agents each in $c_{t-1}$.
	Let $e_a^f$ (resp., $e_a^b$) be the link connecting $v_a^\textit{less}$ and $v_a^\textit{less}$'s forward
	(resp., backward) neighboring node. 
	We define  $e_b^f$ and $e_b^b$ similarly. 
	Then, from $c_{t-1}$, to achieve $g$-partial gathering (i.e., reach configuration (a) or (b) in the previous paragraph), 
	it is obviously necessary that at least two links of 
	either $e_a^f$ or $e_a^b$, and either $e_b^f$ or $e_b^b$, are passed by an agent.
	However, the adversary can make $e_a^f$, $e_a^b$, $e_b^f$, or $e_b^b$ be missing in $c_{t-1}$, which means 
	agents cannot reach configuration (a) or (b) in $c_t$ and they cannot achieve $g$-partial gathering also from in case (ii).
	\if()
	that some agents at $v_a^\textit{less}$ leave the current node or some agents visit $v_a^\textit{less}$.
	Such a behavior is necessary also for $v_b^\textit{less}$ in $c_{t-1}$.
	
	Then, 	
	
	Then, letting $e_a^f$ and $e_a^b$ (resp., $e_b^f$ and $e_b^b$) be the links 
	connecting $v_a^\textit{less}$ to 
	
	However, 
	the adversary can make some link to achieve $g$-partial gathering to be missing, 
	and agents cannot solve the problem from $c_{t-1}$ also in case (ii).
	\fi
	Therefore, the theorem follows. 
	\if()
	and at least a part of agents at $v_a^\textit{less}$ and a part of agents at $v_b^\textit{less}$ need to gather at a single node in $c_{t-1}$.
	However, by the same discussion as that in case (i),
	the adversary can make a link between $v_a^\textit{less}$ and $v_b^\textit{less}$ be missing, which means
	$g$-partial gathering is not achieved in $c_t$ similarly to case (i).
	Thus, agents cannot achieve $g$-partial gathering also from case (ii).
	Thus, the theorem follows.
	\fi
\end{proof}

\section{The case of $2g+1\le k\le 3g-2$}
\label{sec:more2g}

In this section, when $2g+1\le k\le 3g-2$,
we proposed an algorithm to solve the $g$-partial gathering problem in dynamic rings 
with $O(n\log g)$ rounds and the total number of $O(gn\log g)$ moves. 
In this algorithm, all agents try to travel once around the ring to get IDs of all agents, and then
determine a single common node where all agents should gather.
However, it is possible that some agent cannot travel once around the ring and get IDs of all agents due to missing links.
Agents treat this by additional behaviors explained by the following subsections.
The algorithm comprises two phases: the selection phase and the gathering phase. 
In the selection phase, agents move in the ring and determine the \textit{gathering node} where they should gather. 
In the gathering phase, agents try to stay  at the gathering node.

\subsection{Selection phase}
\label{sec:more2gSelection}
The aim of this phase is that each agent achieves either of the following two goals: 
(i) It travels once around the ring and gets IDs of all agents, or 
(ii) it detects that all agents stay at the same node. 
To this end, 
we use an idea similar to \cite{gatheringDynamicRing} which 
considers total gathering in dynamic rings.
First, each agent $a_i$ writes its ID on the current whiteboard and then tries to move forward for $3n$ rounds.
During the movement, $a_i$ memorizes values of observed IDs to array $a_i.\textit{ids}$[\,].
After the $3n$ rounds, the number $a_i.\textit{nVisited}$ of nodes that $a_i$ has visited is 
(a) at least $n$ or (b) less than $n$ due to missing links.
In case (a), $a_i$ must have completed traveling once around the ring.
Thus, $a_i$ can get IDs of all $k$ agents (goal (i) is achieved).
Then, $a_i$ (and the other agents) select the gathering node $v_\textit{gather}$ as the node where the minimum ID \textit{min} is written.

In case (b) (i.e., $a_i$ has visited less than $n$ nodes during the $3n$ rounds), 
we show in Lemma \ref{lem:more2gSelection} that 
all $k$ agents stay at the same node
(goal (ii) is achieved).
This situation means that agents already achieve $g$-partial (or total) gathering, and 
they terminate the algorithm execution. 

\begin{table*}[t!]
	\centering 
	\setlength{\myheight}{0.8mm}
	\caption{Global variables used in proposed algorithms.}
	\label{table:variables}
	\footnotesize
	\begin{tabular}{|l|l|l|l|}
		
		\multicolumn{4}{l}{\textbf{Variables for  agent $a_i$}} \\
		\hline 
		Type &  Name & Meaning & Initial value \\ 
		\hline
		int  & $a_i.\textit{rounds}$ & 
		\begin{tabular}{l}
			number of rounds from some round 
		\end{tabular}
		& 1 \\ 
		
		int  & $a_i.\textit{nIDs}$   & 
		\begin{tabular}{l}
			number of different IDs that $a_i$ has observed from some round 
		\end{tabular}
		& 0 \\
		
		int  & $a_i.\textit{nVisited}$   &  
		\begin{tabular}{l}
			number of nodes that $a_i$ has ever visited 
		\end{tabular}
		& 0 \\
		
		int  & $a_i.\textit{rank}$   & 
		\begin{tabular}{l}
			ordinal number of how its ID is small \\
			among IDs of agents at the same node
		\end{tabular}
		&  0 \\

		int  & $a_i.\textit{dir}$   & 
		\begin{tabular}{l}
			direction to which $a_i$ tries to move (1: forward, -1: backward) 
		\end{tabular}		& 0 \\
		
		array & $a_i.\textit{ids}[]$ & 
		\begin{tabular}{l}
			sequence of IDs that $a_i$ has observed 
		\end{tabular}
		& $\perp$ \\
		
		\hline
		
		\multicolumn{4}{l}{\textbf{Variables for  node $v_j$}} \\
		\hline 
		Type &  Name & Meaning & Initial value \\ 
		\hline
		
		\hline 
		int & $v_j.{id}$ & 
		\begin{tabular}{l}
			ID stored by $v_j$ 
		\end{tabular}
		& $\perp$ \\
		
		int & $v_j.\textit{nAgents}$ &
		\begin{tabular}{l}
			number of agents staying at $v_j$ 
		\end{tabular}
		& 0 \\ 
		
		int  & $v_j.\textit{dir}$   & 
		\begin{tabular}{l}
			direction to which an agent group that visited $v_j$ \\
			for the first time tries to move (1: forward, -1: backward) \\
		\end{tabular}
		& 0 \\
		
		boolean & $v_j.{\textit{waiting}}$ & 
		\begin{tabular}{l}
			whether some agents that keep staying at $v_j$ exist  or not  
		\end{tabular}
		& false \\
		
		boolean & $v_j.{\textit{fMarked}}$ & 
		\begin{tabular}{l}
			whether $v_j$ is visited by a forward group  or not 
		\end{tabular}
		& false \\
		
		boolean & $v_j.{\textit{bMarked}}$ & 
		\begin{tabular}{l}
			whether $v_j$ is visited by a backward group  or not 
		\end{tabular}
		& false \\
		
		boolean & $v_j.{\textit{candi}}$ & 
		\begin{tabular}{l}
			whether $v_j$ is a gathering-candidate node  or not 
		\end{tabular}
		& false \\
		
		\hline 
		
	\end{tabular}\\
	
	\footnotesize
\end{table*}

The pseudocode of the selection phase is described in Algorithm \ref{algo:more2gSelection}.
Global variables are summarized in Table~\ref{table:variables}
(several variables are used in other sections).
Note that, during the selection phase, 
an agent $a_i$ may visit more than $n$ nodes and observe more than $k$ IDs when it  is blocked less than $2n$ times. 
In this case, $a_i$ memorizes only the first $k$ IDs to memorize the location of the gathering node correctly (line 8).
In addition, even if $a_i$ has visited at least $n$ nodes and got IDs of all $k$ agents, 
it is possible that another agent $a_j$ has blocked at least $2n+1$ times and it does not get IDs of all $k$ agents.
In this case, all agents stay at the same node $v_j$ (Lemma \ref{lem:more2gSelection}), and $a_i$ detects the fact 
when the number $v_j.\textit{nAgents}$ at the current node $v_j$ is equal to 
the number $a_i.\textit{nIDs}$ of different IDs that $a_i$ has ever observed (these values are equal to $k$).
Then, $a_i$ (and the other agents) terminate the algorithm execution since $g$-partial (or total) gathering is already achieved (line 14).

\begin{algorithm}[t!]
	\caption{The behavior of agent $a_i$ in the selection phase ($v_j$ is the current node of $a_i$.)} 
	\label{algo:more2gSelection}     
	
	\Main{}
	$v_j.\textit{id}:=a_i.\textit{id}$, 	$a_i.\textit{ids}[a_i.\textit{nIDs}] := v_j.\textit{id}$\;
	$a_i.\textit{nIDs} := a_i.\textit{nIDs}+1$, $v_j.\textit{nAgents} := v_j.\textit{nAgents} + 1$\;
	\While {$a_i.\textit{rounds} \le 3n$}
	{
		$v_j.\textit{nAgents} := v_j.\textit{nAgents} - 1$\;
		Try to move from the  current node  $v_j$ to the forward node $v_{j+1}$\;
		\If{$a_i$ reached $v_{j+1}$ (that becomes new $v_j$)}
		{
			$a_i.\textit{nVisited} := a_i.\textit{nVisited} + 1 $\;
			\If{$(v_j.\textit{id} \neq \perp) \land (a_i.\textit{nIDs} < k)$}
			{
				$a_i.\textit{ids}[a_i.\textit{nIDs}] := v_j.\textit{id}$, $a_i.\textit{nIDs} := a_i.\textit{nIDs}+1$\;
				
			}
			
		}
		$v_j.\textit{nAgents} := v_j.\textit{nAgents} + 1$, $a_i.\textit{rounds} := a_i.\textit{rounds}+1$\;
		
	}
	\textbf{if} $a_i.\textit{nVisited} < n$ \textbf{then} terminate the algorithm execution // all $k$ agents stay at the current node\;
	
	\If{$a_i.\textit{nVisited} \ge n$}
	{
		// $a_i$ traveled once around the ring and got IDs of all $k$ agents\;
		\textbf{if} $v_j.\textit{nAgents}=a_i.\textit{nIDs}$ \textbf{then} terminate the algorithm execution // all $k$ agents stay at the current node\;
		Let \textit{min} be the minimum ID among $a_i.\textit{ids}[\,]$ and select  the gathering node $v_\textit{gather}$ as a node where \textit{min} is written\; 
		Terminate the selection phase and enter the gathering phase \;
	}
\end{algorithm}

Concerning the selection phase, we have the following lemma. 

\begin{lemma}
	\label{lem:more2gSelection}
	After finishing Algorithm \ref{algo:more2gSelection}, 
	each agent achieves 
	either of the following two goals:
	(i) It travels once around the ring and gets IDs of all agents, or 
	(ii) it detects that all agents stay at the same node. 
	
\end{lemma}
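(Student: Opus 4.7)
The plan is to handle the two mutually exclusive cases the algorithm branches into after the $3n$-round loop: either $a_i.\textit{nVisited}\ge n$ or $a_i.\textit{nVisited}<n$.

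The first case is the easy one: if $a_i$ has successfully stepped forward at least $n$ times, then its successive positions cover every node of the ring. Each whiteboard stores the ID of the agent that started there, written once in line 1 and never overwritten thereafter, so during its traversal $a_i$ reads each of the $k$ starting IDs (the guard $a_i.\textit{nIDs}<k$ merely suppresses duplicates that would appear from a second circuit). This establishes goal (i).

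The substantive case is $a_i.\textit{nVisited}<n$, in which $a_i$ was blocked in more than $2n$ of the $3n$ rounds. The plan is to prove goal (ii) through a potential-function argument on the clockwise distance. Fix any other agent $a_j$ and let $d_t$ be the clockwise distance from $a_j$ to $a_i$ at the start of round $t$; initially $d_0\in[1,n-1]$ since agents occupy distinct nodes. Two observations drive the proof. \textbf{Stickiness}: once $d_t=0$ the two agents share a node and try the very same forward link, so they are either both blocked or both advance, regardless of the adversary's choice of missing link and the order in which interleaved atomic actions at the shared node are executed; hence $d_{t'}=0$ for all $t'\ge t$. \textbf{Shrinking}: whenever $a_i$ is blocked and $d_t\ge 1$, the missing link is $a_i$'s forward link but not $a_j$'s, so $a_j$ advances and $d$ decreases by one; conversely $d$ can grow only in a round where $a_i$ advances while $a_j$ is blocked, which happens at most $a_i.\textit{nVisited}<n$ times.

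It then suffices to rule out $d_t\ge 1$ throughout $[0,3n]$. Under that hypothesis every one of the more than $2n$ blocked rounds produces a decrement, while the total decrements compatible with $d_0\le n-1$ and fewer than $n$ increments is bounded by roughly $2n-3$, a contradiction. Hence $d_t=0$ at some round and, by stickiness, $d_{3n}=0$. Applied to every other agent this forces all $k$ agents onto $a_i$'s current node, so the terminating test fires for every agent---either via $a_i.\textit{nVisited}<n$ directly, or via $v_j.\textit{nAgents}=a_i.\textit{nIDs}$ for agents whose counter did reach $n$---and goal (ii) is achieved. The main obstacle is the stickiness property: one must verify carefully that neither the adversary's placement of the missing link nor an arbitrary interleaving of atomic actions at a shared node can separate two co-located agents that both intend to step forward.
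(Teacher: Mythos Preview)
Your proof is correct and follows essentially the same potential-function argument as the paper: both track the clockwise distance from an arbitrary agent $a_j$ to the blocked agent $a_i$, observe that at most $n-1$ rounds can increase it while at least $2n+1$ rounds decrease it, and conclude the distance collapses to zero. Your treatment is slightly more explicit in isolating the \emph{stickiness} property and in noting that agents with $\textit{nVisited}\ge n$ still detect co-location via the $v_j.\textit{nAgents}=a_i.\textit{nIDs}$ test, points the paper leaves implicit.
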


\begin{proof}
	We use a proof idea similar to that in \cite{gatheringDynamicRing}, which 
	considers total gathering in dynamic rings and shows  
	that agents achieve goal (i) or (ii) after the movement.
	We consider the cases that the value of $a_i.\textit{nVisited}$ after executing Algorithm \ref{algo:more2gSelection} is
	(a) at least $n$ (line 12), and  
	(b) less than $n$ (line 11) in this order.
	First, if $a_i.\textit{nVisited}\ge n$ holds,
	clearly $a_i$ travels at least once around the ring and 
	gets IDs of all the $k$ agents.
	Thus, $a_i$ achieves goal (i).
	Next, we consider the case that (b) $a_i.\textit{nVisited}<n$ holds.
	In this case, we show that all agents stay at the same node.
	We show this by contradiction, that is, 
	we assume that some agents $a_i$ and $a_j$ exist at different nodes after executing Algorithm \ref{algo:more2gSelection}.
	We consider the distance $\textit{d}_\textit{ji}$ from $a_j$ to $a_i$, that is, 
	when $a_i$ is blocked and $a_j$ moves forward at some round, 
	the value of $d_\textit{ji}$  decreases by one and vice versa.
	Then, if $a_i$ is not blocked at some round,
	the value of $d_\textit{ji}$ increases by at most one since 
	all agents try to move forward.
	This happens at most $n-1$ rounds because of $a_i.\textit{nVisited} <n$.
	On the other hand, if $a_i$ is blocked at some round, 
	$d_\textit{ji}$ decreases by one (or is already 0).
	This happens at least $2n+1$ rounds.
	Then, since the value of $d_\textit{ji}$ is at most $n-1$ in the initial configuration,
	the value of $d_\textit{ji}$ after executing Algorithm \ref{algo:more2gSelection} is
	$\max\{(n-1)+(n-1)-(2n+1),0\} = 0$. 
	This means that $a_i$ and $a_j$ (and the other agents) stay at the same node, which is a contradiction. 
	Thus, agents achieve goal (ii) and 
	the lemma follows.	
\end{proof}

\subsection{Gathering phase}
\label{sec:more2gGathering}

In this phase, 
agents aim to achieve $g$-partial (or total) gathering by trying to visit the gathering node $v_\textit{gather}$. 
Concretely, for $3n$ rounds from the beginning of this phase, 
each agent $a_i$ tries to move forward until it reaches $v_\textit{gather}$.
If agents are blocked few times, all agents can reach $v_\textit{gather}$ and 
they achieve $g$-partial (or total) gathering.
However, it is possible that some agent cannot reach $v_\textit{gather}$ due to link-missings, resulting in a situation such that there exists a node with only less than $g$ agents. 
To treat this, we introduce a technique called \textit{splitting}.
Intuitively, in this technique, 
when at least $g+c~(c\ge 2)$ agents exist at some node, from there
some $c/2$ agents try to move forward and other $c/2$ agents try to move backward 
in order to visit the node with less than $g$ agents, and the remaining $g$ agents stay at the current node.
By repeating this behavior and additional behaviors explained later, 
agents aim to reduce the gap between $g$ and the number of agents at the node with less than $g$ agents,
and eventually achieve $g$-partial gathering. 

\begin{figure*}[t!]
	\centering
	\includegraphics[scale=0.7]{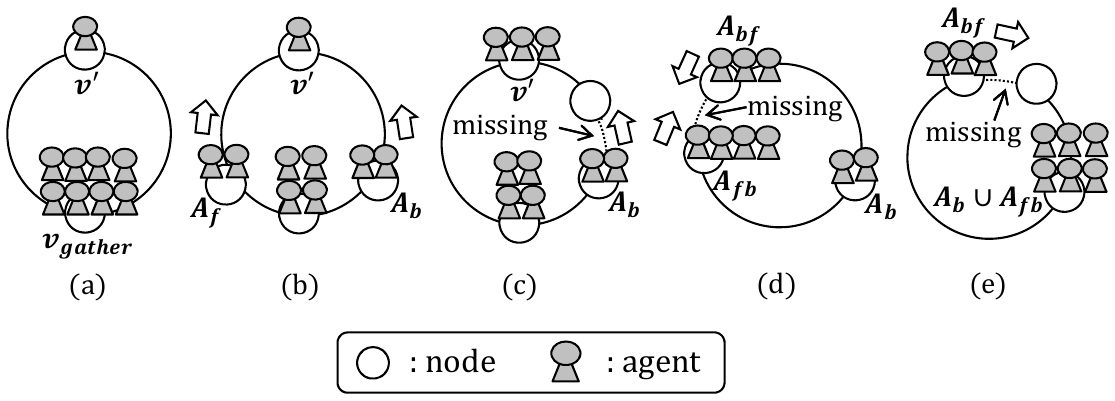}
	\caption{An execution example of the gathering phase when $2g+1\le k\le3g-2$~($g=4$).}
	\label{fig:2g+1}
\end{figure*} 

Concretely, after the $3n$ round from when agents tried to move forward to reach $v_\textit{gather}$, 
by the similar discussion of the proof of Lemma \ref{lem:more2gSelection}, 
all agents that do not reach $v_\textit{gather}$ stay at the same node.
Letting $v'$ be the node,
there are at most two nodes $v_\textit{gather}$ and $v'$ where agents exist after the movement.
An example in the case of $g = 4$ is given in Fig.~\ref{fig:2g+1}(a) (we omit nodes unrelated to the explanation).
Then, when agents recognize by knowledge of $k$ and $g$ that at least $g$ agents exist at both $v_\textit{gather}$ and $v'$,
they already achieve $g$-partial gathering and thus terminate the algorithm execution.
Otherwise, i.e., when $v_\textit{gather}$ or $v'$ has only less than $g$ agents, 
agents use the splitting technique and some agents try to visit the node with less than $g$ agents. 
This technique comprises at most $\lceil \log g\rceil$ subphases.
Without loss of generality, at the beginning of the splitting, 
we assume that there exist $g-c'~(1\le c'\le g-1)$ agents at $v'$. 
Then, letting $c = k-2g+c'$, there exist $k -(g-c') = g+c$ agents at $v_\textit{gather}$.
Notice that $c> c'$ and $c\ge 2$ hold.
First, each agent $a_i$ at $v_\textit{gather}$ calculates how small its ID is among the $g +c$ agents. 
We denote the ordinal number by $a_i.\textit{rank}$.
Then, if $1\le a_i.\textit{rank}\le g$, $a_i$ stays at $v_\textit{gather}$.
If $g+1\le a_i.\textit{rank}\le \lfloor c/2\rfloor$, 
it belongs to the \textit{forward agent group} $A_f$. 
Otherwise, i.e., if $\lfloor c/2\rfloor +1\le a_i.\textit{rank}\le g+c$, 
it belongs to the \textit{backward agent group} $A_b$. 
Thereafter, for $n$ rounds, 
the forward agent group $A_f$ (resp., the backward group $A_b$) tries to move forward (resp., backward)
until visiting $v'$ (Fig.~\ref{fig:2g+1}(b)). 
Then, since at most one link is missing at each round, $A_f$ or $A_b$ can move to the next node during the movement
and either of them can reach $v'$ (Fig.~\ref{fig:2g+1}(c)).

After the $n$ rounds, since $A_f$ or $A_b$ can reach $v'$,
the number of agents at $v'$ increases by at least $\lfloor c/2\rfloor$.
Without loss of generality, we assume that $A_f$ could reach $v'$ but 
$A_b$ had been blocked and could not reach $v'$.
Thereafter, all agents at $v_\textit{gather}$ and all agents at $v'$ try to move in the ring and 
visit the node where $A_b$ is currently staying.
Concretely, 
let $A_\textit{fb}$ be the agent group comprising all agents that stayed at $v_\textit{gather}$ for the previous $n$ rounds 
and $A_\textit{bf}$ be the agent group comprising 
all agents in $A_f$ and all agents that stayed at $v'$ for the previous $n$ rounds.
Then, for $n$ rounds, 
$A_\textit{fb}$ (resp., $A_\textit{bf}$) tries to move forward (resp., backward) until visiting the node where $A_b$ is staying (Fig.~\ref{fig:2g+1}(d)).
After that, if $A_\textit{fb}$ (resp., $A_\textit{bf}$) could not reach the node with $A_b$,
for $n$ rounds, it switches the direction and tries to move backward (resp., forward) until visiting the node with $A_b$. 
By there behaviors, if we assume that there does not exist the node with $A_b$, 
every node is visited by $A_\textit{fb}$ or $A_\textit{bf}$ for this $2n$ rounds.
Thus, $A_\textit{fb}$ or $A_\textit{bf}$ can visit the node with $A_b$ (Fig.~\ref{fig:2g+1}(e)). 
If $A_\textit{bf}$ could reach the node, there exist $g$ agents at the node with $A_\textit{fb}$ and 
$g-c'+c~(\ge g)$ agents at the node with $A_\textit{bf}$~(and $A_b$).
Hence, they achieve $g$-partial gathering and terminate the algorithm execution.
Otherwise, i.e., if $A_\textit{fb}$ could reach the node with $A_b$,
there exist $g + \lfloor c/2\rfloor$ agents at the node with $A_\textit{fb}$~(and $A_b$) and 
$g -c' + \lceil c/2\rceil $ agents at the node with $A_\textit{bf}$.
Then, the gap between $g$ and the number of agents at the node with less than $g$ agents halves. 
Thus, by executing such a subphase at most $\lceil \log g\rceil$ times,
agents can make a configuration such that at least $g$ agents exist at every node where an agent exists
($g$-partial gathering is achieved).

\begin{algorithm}[t!]
	\caption{The behavior of agent $a_i$ in the gathering phase 
		($v_j$ is the current node of $a_i$.)}
	\label{algo:more2gGathering}       
	
	\Main{}
	$a_i.\textit{rounds} := 1$\;
	
	\While {$a_i.\textit{rounds} \le 3n$}
	{
		
		\If {$v_j \neq v_\textit{gather}$}
		{
			$v_j.\textit{nAgents} := v_j.\textit{nAgents} - 1$\;
			Try to move from the current node $v_j$ to the forward node $v_{j+1}$ (that may become new $v_j$ depending on the existence of a missing link)\;
			$v_j.\textit{nAgents} := v_j.\textit{nAgents} + 1$\;
		}
		$a_i.\textit{rounds} := a_i.\textit{rounds}+1$\;
	}
	\textbf{if} $(v_j.\textit{nAgents}\ge g) \land (k-v_j.\textit{nAgents}\ge g)$ \textbf{then}
	terminate the algorithm execution // at least $g$ agents already exist at both nodes with agents\;
	\textbf{else if } $v_j.\textit{nAgents} \ge g+2$ \textbf{then} \textit{More}()\;
	\textbf{else} \textit{Less}()\;	
	
\end{algorithm}

\begin{algorithm}[t!]
	\caption{Procedure \textit{More}() ($v_j$ is the current node of $a_i$.)}
	\label{algo:more2gMore}       
	
	\Main{}
	Calculate $a_i.\textit{rank}$\;
	
	\textbf{if} $1\le a_i.\textit{rank}\le g$ \textbf{then} $a_i.\textit{dir} := 0 $\;
	\textbf{else if} $g+1\le a_i.\textit{rank}\le \lfloor (v_j.\textit{nAgents}-g)/2\rfloor$ \textbf{then} $a_i.\textit{dir} := 1 $\;
	\textbf{else} $a_i.\textit{dir} := -1 $\;
	
	Moving($a_i.\textit{dir}$)\;
	
	\textbf{if} $v_j.\textit{nAgents} <g$ \textbf{then} $a_i.\textit{dir} :=0, v_j.\textit{waiting} := \textit{true}$\;
	\textbf{else if} $v_j.\textit{waiting} =\textit{false}$ \textbf{then} $a_i.\textit{dir} :=1$\;
	\textbf{else} $a_i.\textit{dir} :=-1, v_j.\textit{waiting} = \textit{false}$\;
	LatterMove($a_i.\textit{dir}$)\;
	
\end{algorithm}

\begin{algorithm}[t!]
	\caption{Procedure \textit{Less}()	($v_j$ is the current node of $a_i$.)}
	\label{algo:more2gLess}

	\Main{}
	$a_i.\textit{rounds}:=1, v_j.\textit{waiting} := \textit{true}$\;
	\While{$a_i.\textit{rounds}\le n$}
	{
		Stay at the current node $v_j$\;
		$a_i.\textit{rounds} := a_i.\textit{rounds} +1$\;
	}
	$a_i.\textit{dir}:=-1, v_j.\textit{waiting} :=\textit{false}$\;
	LatterMove($a_i.\textit{dir}$)
	
	\if()
	Moving($a_i.\textit{dir}$)\;
	$a_i.\textit{dir}:= a_i.\textit{dir} \times (-1)$\;
	Moving($a_i.\textit{dir}$)\;
	\fi

\end{algorithm}

\begin{algorithm}[t!]
	\caption{Procedure \textit{Moving}($a_i.\textit{dir}$) ($v_j$ is the current node of $a_i$.)}
	\label{algo:more2gMove}       
	
	\Main{}
	$a_i.\textit{rounds} := 1$\;	
	\While {$a_i.\textit{rounds}\le n$}
	{
		\If {$a_i.\textit{dir} := 0$}
		{stay at the current node $v_j$\;}
		\Else
		{	
			\While{$v_j.\textit{waiting}=\textit{false}$}
			{
				$v_j.\textit{nAgents} := v_j.\textit{nAgents}-1$\;
				Try to move from the current node $v_j$ to node $v_{j+a_i.\textit{dir}}$ (that may become new
				$v_j$ depending on the existence of a missing link)\;	
				$v_j.\textit{nAgents} := v_j.\textit{nAgents}+1$
			}
		}
		$a_i.\textit{round}:=a_i.\textit{round}+1$\;	
	}
	
\end{algorithm}

\begin{algorithm}[t!]
	\caption{Procedure \textit{LatterMove}($a_i.\textit{dir}$) ($v_j$ is the current node of $a_i$.)}
	\label{algo:more2gMove2}       
	
	Moving($a_i.\textit{dir}$)\;
	$a_i.\textit{dir}:= a_i.\textit{dir} \times (-1)$\;
	Moving($a_i.\textit{dir}$)\;
	
	\textbf{if} $(v_j.\textit{nAgents}\ge g) \land (k-v_j.\textit{nAgents}\ge g)$ \textbf{then}
	terminate the algorithm execution\;
	\textbf{else if } $v_j.\textit{nAgents} \ge g+2$ \textbf{then} \textit{More}()\;
	\textbf{else} \textit{Less}()\;	
\end{algorithm}

The pseudocode of the gathering phase is described in Algorithm \ref{algo:more2gGathering}.
After executing the first $3n$ rounds of this phase, 
each agent $a_i$ counts the number of agents at the current node through $v_j.\textit{nAgents}$.
If $v_j.\textit{nAgents}\ge g+2$ and the number of agents at the other node with an agent is less than $g$, 
$a_i$ executes procedure \textit{More}() to decide whether it belongs to $A_f$ or  $A_b$, or 
keeps staying at $v_j$, depending on $a_i.\textit{rank}$.
The pseudocode of \textit{More}() is described in Algorithm \ref{algo:more2gMore}.
If $v_j.\textit{nAgents}<g$, $a_i$ executes procedure \textit{Less}() to keep staying at $v_j$ for $n$ rounds 
and then resume moving.
The pseudocode of \textit{Less}() is described in Algorithm \ref{algo:more2gLess}. 
In Algorithms \ref{algo:more2gMore} and \ref{algo:more2gLess},
we omit how $a_i$ calculates $a_i.\textit{rank}$ for simplicity, and 
agents use procedures \textit{Moving}() and 
\textit{LatterMove}() to move in the ring, whose pseudocodes are described in 
Algorithms \ref{algo:more2gMove} and \ref{algo:more2gMove2}, respectively.

Concerning the gathering phase, we have the following lemma. 

\begin{lemma}
	\label{lem:more2gGathering}
	After finishing Algorithm \ref{algo:more2gGathering}, agents achieve $g$-partial gathering.
\end{lemma}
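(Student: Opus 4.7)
My plan decomposes into three stages. First, reduce the configuration after the initial $3n$-round forward sweep of Algorithm \ref{algo:more2gGathering} to at most two occupied nodes: $v_\textit{gather}$ and a single ``stuck'' node $v'$. Second, if both already host $\ge g$ agents, the termination test fires and we are done. Third, otherwise, induct on the splitting subphases, showing that the deficit at the underpopulated node is at least halved each subphase, so that $\lceil \log g \rceil$ subphases suffice.

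The first stage essentially replays the distance argument of Lemma \ref{lem:more2gSelection}. Any agent failing to reach $v_\textit{gather}$ in $3n$ rounds must have been blocked at least $2n+1$ times. For any two such agents $a_i, a_j$, the clockwise gap from $a_j$ to $a_i$ decreases whenever $a_i$ is blocked while $a_j$ moves and increases by at most one otherwise. Since the increases are capped by the number of rounds $a_i$ actually moves (at most $n-1$), the gap collapses, and all agents that miss $v_\textit{gather}$ share one node $v'$.

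For one subphase, let $v_\textit{gather}$ hold $g+c$ agents and $v'$ hold $g-c'$ agents, with $1\le c'\le g-1$ and $c=k-2g+c' \ge c'+1$ (the last inequality using $k\ge 2g+1$). Procedure \textit{More}() splits the $v_\textit{gather}$ side into a stay group of size $g$, a forward group $A_f$ of size $\lfloor c/2\rfloor$, and a backward group $A_b$ of size $\lceil c/2\rceil$. The crucial observation is that $A_f$ and $A_b$ move in opposite directions, so 1-interval connectivity forces at most one of them to be blocked in any given round. Hence their combined traversal over the $n$ rounds of the first \textit{Moving}() call is at least $n$, equal to the sum of the two arcs from $v_\textit{gather}$ to $v'$, so at least one of $A_f, A_b$ reaches $v'$. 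I apply the same opposing-direction argument to the two consecutive \textit{Moving}() calls inside \textit{LatterMove}() (a forward/backward split, then swapped) to show that either $A_\textit{fb}$ or $A_\textit{bf}$ reaches the still-stuck group $A_b$ within $2n$ rounds. A short case analysis on the successful group then shows that either the new deficit at the unique remaining underpopulated node is at most $\lfloor c'/2\rfloor$, or $g$-partial gathering is already reached (using $c \ge c'+1$ to bound the merged populations), in which case the termination test at the end of \textit{LatterMove}() fires.

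The main obstacle will be the covering argument for \textit{LatterMove}(). I must rule out an adversary that alternates the missing link across the two arcs separating $\{A_\textit{fb}, A_\textit{bf}\}$ from $A_b$, stalling both groups. Again the opposing-direction observation saves me: only one of the two groups is blockable per round, so their combined per-phase progress is at least $n$, and over the full $2n$ rounds the combined progress is at least $2n$; together with the direction swap between the two phases, this suffices to guarantee at least one group meets $A_b$. Iterating the halving of $c'$ over $\lceil \log g \rceil$ subphases drives the deficit to zero, every occupied node eventually hosts at least $g$ agents, the termination test fires, and $g$-partial gathering holds.
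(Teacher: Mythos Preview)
Your proposal is correct and follows essentially the same approach as the paper's own proof: both reduce to two occupied nodes via the Lemma~\ref{lem:more2gSelection} distance argument, use the opposing-direction observation to guarantee that one of $A_f,A_b$ reaches $v'$ and then one of $A_{\textit{fb}},A_{\textit{bf}}$ reaches the stuck group, and iterate the halving of the deficit $\lceil \log g\rceil$ times. Your quantitative bound $\lfloor c'/2\rfloor$ on the new deficit (derived from $c\ge c'+1$) is a slightly sharper formulation of what the paper states informally as ``the gap halves,'' but the argument is the same.
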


\begin{proof}
	First, for $3n$ rounds from the beginning of the gathering phase,
	each agent tries to move forward to reach $v_\textit{gather}$.
	Then, by the similar discussion of the proof of Lemma \ref{lem:more2gSelection},
	all agents that do not reach $v_\textit{gather}$ stay at the same node $v'$.
	When at least $g$ agents exist at both $v_\textit{gather}$ and $v'$,
	they already achieve $g$-partial gathering and can terminate the algorithm execution there by the knowledge of $g$ and $k$.
	On the other hand, when only less than $g$ agents exist at $v'$ (resp., $v_\textit{gather}$),
	at least $g+2$ agents exist at $v_\textit{gather}$ (resp., $v'$) since we consider the case of $2g+1\le k \le 3g-2$.
	Without loss of generality, we assume that there exist $g-c'~(1\le c'\le g-1)$ agents at $v'$.
	Then, letting $c=k-2g+c'$, there exist $k-(g-c') = k+c$ agents at $v_\textit{gather}$.
	Notice that $c>c'$ and $c\ge 2$ hold.
	
	From such a situation, 
	agents at $v_\textit{gather}$ execute \textit{More}() and then execute \textit{Moving}().
	In Procedure \textit{More}(), 
	$\lfloor c/2\rfloor$ agents belong to a forward group $A_f$ and try to move forward and $\lceil c/2\rceil$ agents 
	belong to a backward group $A_b$ and try to move backward in order to visit $v'$ (lines 3 -- 5).
	Since at most one link is missing, $A_f$ or $A_b$ can move to the next node unless they already visit $v'$,
	and hence either of them can visit $v'$ within $n$ rounds. 
	Without loss of generality, we assume that $A_f$ could reach $v'$ but $A_b$ could not do it due to link-missings.
	Thereafter, in order to visit the node with $A_b$, by Procedure \textit{LatterMove}(),
	agents that stayed at $v_\textit{gather}$ (resp., $v'$) belong to an agent group $A_\textit{fb}$ (resp., $A_{\textit{bf}}$) and try to move forward (resp., backward) for $n$ rounds and then 
	try to move backward (resp., forward) for $n$ rounds.
	Then, when we assume that $A_b$ does not exist, 
	$A_\textit{fb}$ and  $A_\textit{bf}$ can visit all nodes for this $2n$ rounds, 
	and thus either of them can reach the node with $A_b$.
	When  $A_\textit{bf}$ reaches the node, at least $g$ agents exist at both nodes with agents 
	and the can terminate the algorithm there. 
	When $A_\textit{fb}$ reaches the node, there exist $g+\lceil c/2\rceil$ agents at the node where $A_\textit{fb}$ exists,
	and there exist $g-c'+\lfloor c/2\rfloor$ agents exist at the node where $A_\textit{bf}$ exists~(Fig.~\ref{fig:2g+1}(e)). 
	Since $c>c'$ holds, the gap between $g$ and the number of agents at the node with less than $g$ agents halves 
	by the above behaviors.
	Thus, by repeating such behaviors at most $\lceil \log g\rceil$ times, 
	agents can achieve $g$-partial gathering. Therefore, the lemma follows. 	 
\end{proof}

We have the following theorem for the proposed algorithm.

\begin{theorem}
	\label{theo:2g+1}
	When $2g+1\le k\le 3g-2$ holds, 
	the proposed algorithm solves the $g$-partial gathering problem in dynamic rings with 
	$O(n\log g)$ rounds and  the total number of $O(gn\log g)$ moves. 
\end{theorem}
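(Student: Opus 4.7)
The plan is to combine Lemmas~\ref{lem:more2gSelection} and \ref{lem:more2gGathering} with a round-by-round accounting of the algorithm's stages. Correctness falls out immediately: by Lemma~\ref{lem:more2gSelection}, after the selection phase each agent is in one of two situations. In case (ii) (detected at line 11 or the ``$v_j.\textit{nAgents}=a_i.\textit{nIDs}$'' check at line 14 of Algorithm~\ref{algo:more2gSelection}), all $k$ agents occupy a single node, so terminating immediately satisfies the $g$-partial gathering specification since $k \ge 2g+1 > g$. In case (i), every agent knows $v_\textit{gather}$ and enters Algorithm~\ref{algo:more2gGathering}; Lemma~\ref{lem:more2gGathering} then certifies that a $g$-partial gathering configuration is reached.

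For the round complexity I would sum over the stages. The selection phase is a single \textbf{while} loop of exactly $3n$ rounds. The opening ``move forward to $v_\textit{gather}$'' block of the gathering phase is another $3n$ rounds. After that the algorithm enters at most $\lceil\log g\rceil$ splitting subphases; each subphase invokes \textit{Moving}() (Algorithm~\ref{algo:more2gMove}) three times in succession (once inside \textit{More}()/\textit{Less}() and twice inside \textit{LatterMove}()), each bounded by $n$ rounds, giving $O(n)$ rounds per subphase. The sum is $O(n) + O(n) + O(n\log g) = O(n\log g)$.

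For the move complexity, the regime $2g+1 \le k \le 3g-2$ gives $k = O(g)$, and in any round each of the $k$ agents makes at most one move. Thus the total number of moves is at most $k$ times the total number of rounds, i.e.\ $O(k)\cdot O(n\log g) = O(gn\log g)$, as claimed.

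The main obstacle is pinning down the $O(\log g)$ bound on the number of splitting subphases. For this I would isolate the invariant that the \emph{deficit} $c'$ at the agent node holding fewer than $g$ agents at least halves after every subphase that has not already terminated. Using the notation of Lemma~\ref{lem:more2gGathering}, the surplus at the heavy node is $c = k - 2g + c' \ge c' + 1$, so $\lfloor c/2 \rfloor \ge \lceil c'/2 \rceil$; after a subphase, the new deficit is either $0$ (when the group $A_\textit{bf}$ reaches $A_b$, and the algorithm terminates) or $c' - \lfloor c/2 \rfloor \le \lfloor c'/2 \rfloor$ (when $A_\textit{fb}$ reaches $A_b$). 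I would also briefly check that in the surviving case the new configuration again consists of exactly two agent nodes with one of them holding at least $g+2$ agents, so that the next iteration of \textit{More}()/\textit{LatterMove}() remains well defined. Iterating this halving yields termination within $\lceil \log g\rceil$ subphases, completing the round-count argument and hence the theorem.
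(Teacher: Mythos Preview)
Your proposal is correct and follows essentially the same route as the paper: correctness via Lemmas~\ref{lem:more2gSelection} and~\ref{lem:more2gGathering}, a stage-by-stage round count summing to $O(n\log g)$, and the observation $k=O(g)$ to convert rounds into the $O(gn\log g)$ move bound. Your deficit-halving invariant (showing $c'-\lfloor c/2\rfloor\le\lfloor c'/2\rfloor$ from $c\ge c'+1$, and checking that the surviving heavy node again has at least $g+2$ agents) is in fact more careful than the paper's own one-line ``the gap halves'' in the proof of Lemma~\ref{lem:more2gGathering}, but it is exactly the argument that lemma needs and not a different method.
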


\begin{proof}
	By Lemmas \ref{lem:more2gSelection} and \ref{lem:more2gGathering},
	the proposed algorithm can solve the $g$-partial gathering problem. 
	In the following, we analyze the time complexity and the required total number of agent moves. 
	
	First, in the selection phase, each agent tries to move forward for $3n$ rounds in order to 
	determine the gathering node $v_\textit{gather}$.
	Since $k =O(g)$ holds in this section, 
	This requires $O(n)$ rounds and the total number of $O(gn)$ moves. 
	Next, in the gathering phase, for $3n$ rounds,
	each agent first tries to move forward to reach $v_\textit{gather}$,
	which requires $O(n)$ rounds and the total number of $O(gn)$ moves. 
	Thereafter, 
	(i) by Procedure \textit{More}(), for $n$ rounds, 
	several agents from the node with at least $g+2$ agents move forward or backward to visit the node with less than $g$ agents, and 
	(ii) by Procedure \textit{LatterMove}(), for $2n$ rounds, 
	agent groups $A_\textit{fb}$ and $A_\textit{bf}$ move forward or backward 
	to visit the node with less than $g$ agents. 
	The above behaviors require $O(n)$ rounds and the total number of $O(gn)$ moves. 
	Since agents repeat such behaviors at most $\lceil \log g\rceil$ times until achieving $g$-partial gathering,
	the total time complexity is $O(n\log g)$ and the total move complexity is $O(gn\log g)$.
	Therefore, the theorem follows. 
\end{proof}

\section{The case of $3g-1\le k\le 8g-4$}
\label{sec:more3g}
In this section, when $3g-1\le k\le 8g-4$, 
we propose a naive algorithm to solve the problem with $O(n)$ rounds and the total number of $O(kn)$ moves.
Since $k = O(g)$ holds in this case, this algorithm is asymptotically in terms of 
both the time and move complexities, similar to the third algorithm explained in Section \ref{sec:more8g}.
The algorithm tactics is the same as that in Section \ref{sec:more2g}, that is, 
all agents try to travel once around the ring to get IDs of all agents, and then
try to gather at a common single gathering node $v_\textit{gather}$. 
Hence, similarly to Section \ref{sec:more2g}, 
the algorithm comprises the selection phase and the gathering phase. 
The selection phase is exactly the same as that in Section \ref{sec:more2gSelection}, 
and we revise the gathering phase to reduce the total number of agent moves from $O(gn\log g)$ to $O(gn)$,
explained by the following paragraphs. 

After agents finish the selection phase, by Lemma \ref{lem:more2gSelection},
(i) all agents travelled once around the ring and get IDs of all agents 
(in this case all agents do not stay at the same node), or 
(ii) all agents already stay at the same node. 
In case (ii), agents already achieve $g$-partial gathering and terminate the algorithm execution there. 
In case (i), similarly to Section \ref{sec:more2gGathering},
each agent first tries to move forward for $3n$ rounds to visit the gathering node $v_\textit{gather}$.
Then, all agents that do not reach $v_\textit{gather}$ stay at the same node $v'$ similarly to Section \ref{sec:more2gGathering}.
However, different from Section \ref{sec:more2gGathering}, when 
there exist only less than $g$ agents at $v'$ (resp., $v_\textit{gather}$),
there exist at least $2g$ agents at $v_\textit{gather}$ (resp., $v'$)
since we consider the case of $3g-1\le k\le 8g-4$ in this section.
We call such a node with at least $2g$ agents $v_\textit{more}$. 
Notice that it is possible that at least $2g$ agents exist at both $v_\textit{gather}$ and $v'$. 
Then, we use this fact and modify (simplify) the splitting technique.
Intuitively, from $v_\textit{more}$, 
an agent group with at least $g$ agents tries to move forward and another agent group with at least $g$ agents tries to move backward.
In addition, when an agent group with at least $g$ agents visits a node where less than $g$ agents exist, 
the less than $g$ agents join the agent group 
and try to move in the same direction as that of the group.
By this behavior, it does not happen that all agents are blocked, and 
agents can eventually terminate in a configuration such that at least $g$ agents exist at each node where an agent exists. 

Concretely, 
let $k' (\ge 2g)$ be the number of agents staying at $v_\textit{more}$.
Then, each agent $a_i$ at $v_\textit{more}$ first calculates its $a_i.\textit{rank}$.
If $1\le a_i.\textit{rank}\le g$ holds, it belongs to a forward agent group $A_f$ and 
tries to move forward. 
Else if ($k' < 3g) \lor (g+1\le a_i.\textit{rank}\le 2g)$ holds, 
it belongs to the backward agent group $A_b$ and tries to move backward.
If $a_i$ does not satisfy any of the above conditions, 
it terminates the algorithm execution because 
there still exist at least $g$ agents even after $A_f$ and $A_b$ leave $v_\textit{more}$. 
While $A_f$ and $A_b$ move in the ring, if $A_f$ (resp., $A_b$) visits a new  node $v_j$,
it sets a flag $v_j.\textit{fMarked}$ (resp., $v_j.\textit{bMarked}$) representing that 
$v_j$ is visited by $A_f$ (resp., $A_b$).
These flags are used for an agent group $A$ to check whether or not 
the current node is visited by another agent group and $A$ can stop moving in the ring.
In addition, if $A_f$ (resp., $A_b$) visits a node with less than $g$ agents, 
the less than $g$ agents join $A_f$ (resp., $A_b$) and try to move forward (resp., backward). 
However, it is possible that 
the number \textit{num} of agents in the updated group is more than $2g$.
In this case, using their IDs, 
only $g$ agents continue to try moving and the remaining $\textit{num} - g$ agents terminate the algorithm execution at the current node.
By this behavior, 
each link is passed by at most $2g$ agents and the total number of moves for agent groups can be reduced to $O(gn)$
(this technique is used in Section \ref{sec:more8g}).
Moreover, since $A_f$ or $A_b$ can visit a next node at each round even when some link is missing, 
when $A_f$ (resp., $A_b$) repeats such a behavior for $n$ rounds or 
until it visits some node $v_j$ with $v_j.\textit{bMarked} = \textit{true}$ (resp., $v_j.\textit{fMarked} = \textit{true}$), 
$A_b$ and $A_f$ can visit all the nodes in the ring in total and $g$-partial gathering is achieved. 

An example is given in Fig.\,\ref{fig:more3gGathering}
(we omit nodes unrelated to the explanation).
From (a) to (b), a backward group $A_b$ visits a node with two $(<g)$ agents, and 
the two agents join $A_b$.
Then, since the number of agents in the updated $A_b$ is $7~(>2g)$,
only three agents continue to try moving and the remaining four agents terminate the algorithm execution there ((b) to (c)). 
From (c) to (d), we assume that 
a forward agent group $A_f$ continues to be blocked due to a missing link.
Even in this case, $A_b$ can continue to move since there is only one missing link at each round.
When $A_f$ (resp., $A_b$) visits a node with a flag set by $A_b$ (resp., $A_f$) like (e), or 
$n$ rounds passed from when agent groups started trying to move, 
agents achieve $g$-partial gathering.  

\begin{figure*}[t!]
	\centering
	\includegraphics[scale=0.7]{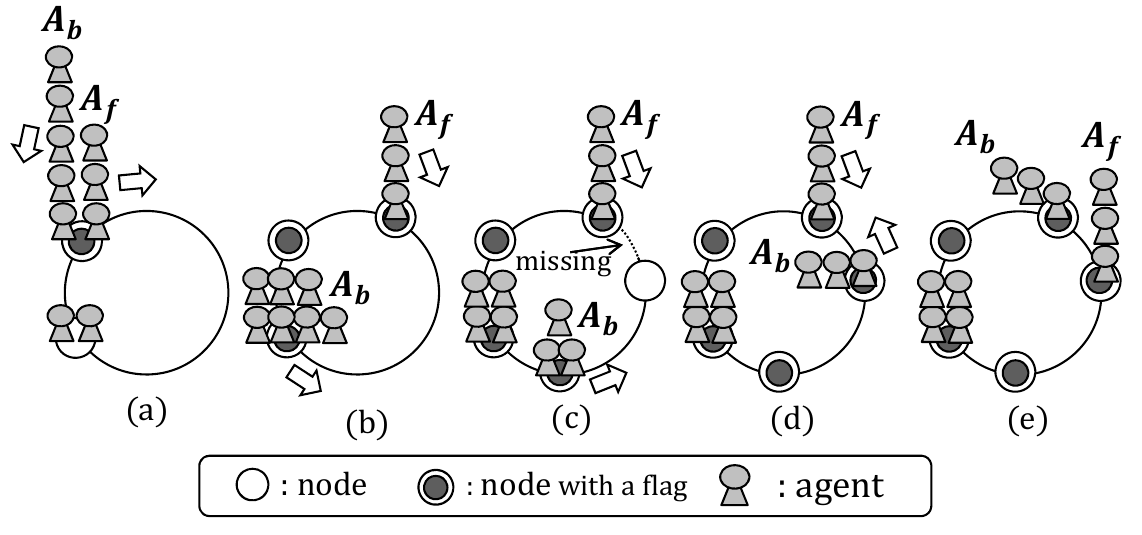}
	\caption{An execution example of the gathering phase when $3g-1\le k\le 8g-4$ ($g=3$).}
	\label{fig:more3gGathering}
\end{figure*} 

\begin{algorithm}[t!]
	\caption{The behavior of agent $a_i$ in the gathering phase 
		($v_j$ is the current node of $a_i$.)}
	\label{algo:more3gGathering}       
	
	\Main{}
	$a_i.\textit{rounds} := 1$\;
	
	\While {$a_i.\textit{rounds} \le 3n$}
	{
		
		\If {$v_j \neq v_\textit{gather}$}
		{
			$v_j.\textit{nAgents} := v_j.\textit{nAgents} - 1$\;
			Try to move from the current node $v_j$ to the forward node $v_{j+1}$ (that may become new $v_j$ depending on the existence of a missing link)\;
			$v_j.\textit{nAgents} := v_j.\textit{nAgents} + 1$\;
		}
		$a_i.\textit{rounds} := a_i.\textit{rounds}+1$\;
	}
	\textbf{if} $g\le v_j.\textit{nAgents} \le 2g-1$ \textbf{then}
	terminate the algorithm execution\;
	\textbf{else if } $v_j.\textit{nAgents} \ge 2g$ \textbf{then} \textit{More2}()\;
	\textbf{else} \textit{Less2}()\;

\end{algorithm}

The pseudocode of the gathering phase is described in Algorithm \ref{algo:more3gGathering}. 
After executing the first $3n$ rounds of this phase, 
each agent $a_i$ counts the number of agents at the current node through $v_j.\textit{nAgents}$.
If the number 
is at least $2g$ (i.e., $v_j$ is $v_\textit{more}$),
$a_i$ executes procedure \textit{More2}() and decides whether it belongs to $A_f$ or  $A_b$, or 
terminates the algorithm execution,  depending on $a_i.\textit{rank}$.
The pseudocode of \textit{More2}() is described in Algorithm \ref{algo:more}.
If the number of agents at $v_j$ is less than $g$, $a_i$ executes procedure \textit{Less2}() and keeps staying at $v_j$ until 
a forward or backward agent group visits $v_j$.
The pseudocode of \textit{Less2}() is described in Algorithm \ref{algo:less}. 
In Procedures \ref{algo:more} and \ref{algo:less}, 
agents use procedure \textit{Moving2}() to move in the ring, whose pseudocode is described in Algorithm \ref{algo:moving}.

\begin{algorithm}[t!]
	\caption{Procedure \textit{More2}() ($v_j$ is the current node of $a_i$.)}
	\label{algo:more}

	\Main{}
	$a_i.\textit{rounds} := 1$\;
	Calculate $a_i.\textit{rank}$\;
	
	\textbf{if} $1\le a_i.\textit{rank}\le g$ \textbf{then} $a_i.\textit{dir} := 1 $\;
	\textbf{else if} 
	$(v_j.\textit{nAgents} < 3g) \lor 
	(g+1 \le a_i.\textit{rank}\le 2g)$
	\textbf{then} $a_i.\textit{dir} := -1$\;
	\textbf{else} Terminate the algorithm execution // There still exist at least $g$ agents even after 
	the forward group and the backward group leave $v_j$\;
	\textit{Moving2}()\;
	
\end{algorithm}

\begin{algorithm}[t!]
	\caption{Procedure \textit{Less2}()	($v_j$ is the current node of $a_i$.)}
	\label{algo:less}

	\Main{}
	$a_i.\textit{rounds} := 1$\;
	$v_j.\textit{waiting} := \textit{true}$\;
	\While{true}
	{
		$a_i.\textit{rounds} := a_i.\textit{rounds} +1$\;
		\textbf{if} ($v_j.\textit{fMarked}= \textit{true}) \land (v_j.\textit{bMarked}= \textit{true})$
		\textbf{then} terminate the algorithm execution  // A forward group and a backward group visited $v_j$ simultaneously\;
		\If {($v_j.\textit{fMarked}= \textit{true}) \lor (v_j.\textit{bMarked}= \textit{true})$}
		{
			Calculate $a_i.\textit{rank}$ \;
			\textbf{if} ($v_j.\textit{nAgents} \ge 2g$) $\land$ ($a_i.rank\ge g+1$) \textbf{then} terminate the algorithm execution \;
			\textbf{else} $a_i.\textit{dir}:= v_j.\textit{dir}$ and execute \textit{Moving2}()\;
		}
	}
	
\end{algorithm}

In addition, it is possible that a forward group and a backward group visit some node simultaneously. 
This implies that all the nodes to be visited by the forward and backward groups are already visited,
and thus they terminate the algorithm execution there (line 5 in Algorithm \ref{algo:less} and line 12 in Algorithm \ref{algo:moving}).
Moreover, it is possible that some link continues to be missing for a long time 
when $A_f$ and $A_b$ are trying to move in the ring; 
$A_f$ and $A_b$ continue to be blocked by the same missing link $e'$ and 
$A_f$ (resp., $A_b$) cannot observe a node $v_j$ with $v_j.\textit{bMarked} = \textit{true}$
(resp., $v_j.\textit{fMarked} = \textit{true}$).
Even in this case, before they are blocked by $e'$,
$A_f$ or $A_b$ can visit a next node at each round and 
they can visit all the $n$ nodes in total within $n$ rounds.
Hence, even if agents do not observe a flag for termination, they can terminate the algorithm execution after the $n$ rounds from when 
they start trying to move 
(lines 1 and 17 in Algorithm \ref{algo:moving}).

Concerning the gathering phase, we have the following lemma.

\begin{algorithm}[t!]
	\caption{Procedure \textit{Moving2}()	($v_j$ is the current node of $a_i$.)}
	\label{algo:moving}       
	
	\Main{}
	
	\While{$a_i.\textit{rounds} \le n$}
	{
		$a_i.\textit{rounds} := a_i.\textit{rounds} +1$\;
		\textbf{if} ($a_i.\textit{dir} = 1) \land (v_j.\textit{fMarked} = \textit{false})$ \textbf{then} $v_j.\textit{fMarked} := \textit{true}$\;
		\textbf{if} ($a_i.\textit{dir} = -1) \land (v_j.\textit{bMarked} = \textit{false})$ \textbf{then} $v_j.\textit{bMarked} := \textit{true}$\;
		
		$v_j.\textit{nAgents} := v_j.\textit{nAgents} -1$\;
		Try to move from the current node $v_j$ to node $v_{j+a_i.\textit{dir}}$ (that may become new $v_j$ depending on the existence of a missing link)\;
		$v_j.\textit{nAgents} := v_j.\textit{nAgents} +1$\;
		
		\If{$a_i$ reached $v_{j+a_i.\textit{dir}}$ (that becomes new $v_j$)}
		{
			\textbf{if} 
			$((a_i.\textit{dir} = 1) \land (v_j.\textit{bMarked} = \textit{true})) \lor 
			(a_i.\textit{dir} = -1) \land (v_j.\textit{fMarked} = \textit{true})) $
			\textbf{then} 
			terminate the algorithm execution \;
			\textbf{if} ($a_i.\textit{dir} = 1)$ \textbf{then} $v_j.\textit{fMarked} := \textit{true}$\;
			\textbf{if} ($a_i.\textit{dir} = -1$) \textbf{then} $v_j.\textit{bMarked} := \textit{true}$\;
			\textbf{if} ($v_j.\textit{fMarked} = \textit{true}) \land (v_j.\textit{bMarked} = \textit{true})$
			\textbf{then} 
			terminate the algorithm execution // A forward group and a backward group visited $v_j$ simultaneously \;
			$v_j.\textit{dir}:= a_i.\textit{dir}$\;
			\If{$v_j.\textit{waiting} = \textit{true}$}		
			{
				Update $a_i.\textit{rank}$\;
				\textbf{if} ($v_j.\textit{nAgents} \ge 2g$) $\land$ ($a_i.rank\ge g+1$) \textbf{then} terminate the algorithm execution \;
			}
		}		
	}
	Terminate the algorithm execution \;
	
\end{algorithm}

\begin{lemma}
	\label{lem:more3gGathering}
	After finishing Algorithm \ref{algo:more3gGathering}, agents achieve $g$-partial gathering.
\end{lemma}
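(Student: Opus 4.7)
The plan is to follow the case analysis built into Algorithm \ref{algo:more3gGathering}. First, I would re-apply the movement argument from Lemma \ref{lem:more2gSelection} to the initial $3n$-round forward walk (lines 2--7): any agent that fails to reach $v_\textit{gather}$ must be blocked at least $2n{+}1$ times, and the same distance-contraction argument forces all such stragglers onto a single common node $v'$. Hence at line 8 there are at most two occupied nodes, $v_\textit{gather}$ and $v'$, whose agent counts sum to $k$.

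Next, I would split on $v_j.\textit{nAgents}$. If some agent observes $g \le v_j.\textit{nAgents} \le 2g-1$, then since $k \ge 3g-1$ the other node holds at least $k-(2g-1) \ge g$ agents, so line 8 terminates in a valid $g$-partial configuration. If $v_j.\textit{nAgents} < g$, that node is $v'$; its agents execute \textit{Less2}() and mark the whiteboard with $\textit{waiting}=\textit{true}$, while the other node carries $k' := k - v_j.\textit{nAgents} \ge 2g$ agents and enters \textit{More2}() as $v_\textit{more}$. Otherwise $v_j.\textit{nAgents} \ge 2g$ and \textit{More2}() is executed there. For the \textit{More2}() branch I would verify the partition invariant in lines 3--5: ranks $[1,g]$ form $A_f$ (exactly $g$ agents), ranks $[g+1,\min(2g,k')]$ form $A_b$ (either $k'-g \ge g$ agents when $k'<3g$, or exactly $g$ agents when $k' \ge 3g$), and any remaining $\max(0,k'-2g)$ agents terminate at $v_\textit{more}$; when this remainder is nonzero it is at least $g$ because $k' \ge 3g$. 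Thus $v_\textit{more}$ ends up empty or with $\ge g$ terminated agents, and two groups $A_f$, $A_b$, each of size $\ge g$, head out in opposite directions.

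The core claim is then that $A_f$ and $A_b$ together cover the ring within $n$ rounds and terminate at nodes holding $\ge g$ agents. Because at most one link is missing per round, in every round at least one of $\{A_f, A_b\}$ advances, so together they visit all $n$ nodes within $n$ rounds; in particular one group must either land on a node already marked by the other or both must land simultaneously at a common node, triggering termination in line 10 or lines 9 and 12 of \textit{Moving2}(). When $A_f$ first arrives at $v'$ (if $v'$ was the $<g$ side), the $\textit{waiting}=\textit{true}$ branch at lines 15--16 merges the $\ell < g$ waiting agents into $A_f$; the combined size $g+\ell < 2g$ triggers no rank-based termination and all $g+\ell$ agents continue as a single forward group. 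If $A_b$ instead reaches $v'$, the symmetric statement holds. In either case the moving group carries at least $g$ agents throughout, and its eventual stopping node inherits that size. As a safety net, if persistent link-missings prevent any mark observation, line 17 of \textit{Moving2}() forces termination after $n$ rounds, and by the coverage argument above the groups must by then occupy each other's trails.

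Putting the cases together, every occupied node in the final configuration carries either $\ge g$ terminated agents (at $v_\textit{more}$ or $v'$ when it initially had $\ge g$) or a stopped moving group of size $\ge g$, which is $g$-partial gathering. The main obstacle I anticipate is the bookkeeping inside \textit{Moving2}(): carefully checking that the rank-based termination at line 16 together with the \emph{Less2}/\emph{More2} merge never lets a group fall below $g$ agents, and verifying that the $\textit{fMarked}/\textit{bMarked}$ flags interact correctly with $v_\textit{more}$ already being doubly-marked after round one. I expect this to reduce to a straightforward finite case check once the bound $k \le 8g-4$ is used to limit merged group sizes.
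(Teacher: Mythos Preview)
Your proposal follows essentially the same route as the paper's proof: apply the Lemma~\ref{lem:more2gSelection} distance-contraction argument to the first $3n$ rounds to reduce to at most two occupied nodes, use $k\ge 3g-1$ to guarantee one of them carries $\ge 2g$ agents, and then argue that the opposing groups $A_f,A_b$ jointly sweep all $n$ nodes in $n$ rounds because at most one link is missing per round. The paper is terser on the rank/merge bookkeeping you spell out, but the structure is identical.

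One genuine gap in your case analysis: when some node has $g\le v_j.\textit{nAgents}\le 2g-1$, you conclude that ``line~8 terminates in a valid $g$-partial configuration.'' But line~8 only terminates the agents \emph{at that node}; the other node may well hold $\ge 2g$ agents (e.g.\ $k=4g$ split as $g$ and $3g$) and will still launch \textit{More2}(). You must then argue that the moving groups pass through the already-terminated node harmlessly (they do, since $v_j.\textit{waiting}=\textit{false}$ there, so lines~14--16 of \textit{Moving2}() are skipped), after which your sweep argument finishes the job. The same omission covers the case where \emph{both} nodes have $\ge 2g$ agents and two pairs $A_f,A_b$ are launched; the paper also glosses over this, but your flag-interaction ``finite case check'' should absorb it. Finally, the bound $k\le 8g-4$ plays no role in correctness here---it is used only for the $O(kn)=O(gn)$ move count in Theorem~\ref{theo:smallK}---so you need not invoke it.
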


\begin{proof}
	First, for $3n$ rounds from the beginning of the gathering phase,
	each agent tries to move forward for reaching $v_\textit{gather}$.
	Then, by the similar discussion of the proof of Lemma \ref{lem:more2gSelection},
	all agents that do not reach $v_\textit{gather}$ stay at the same node $v'$.
	In addition, when only less than $g$ agents exist at $v'$ (resp., $v_\textit{gather}$),
	at least $2g$ agents exist at $v_\textit{gather}$ (resp., $v'$) since we consider the case of $k\ge 3g-1$.
	Thus, agents at $v_\textit{gather}$ or $v'$ execute \textit{More2}() and then execute \textit{Moving2}().
	In addition, when some forward group $A_f$ or backward group $A_b$ executing \textit{Moving2}() visits a node with less than $g$ agents, 
	the less than $g$ agents join the group.
	If the number \textit{num} of agents in the updated group is more than $2g$, by using their IDs, 
	only $g$ agents continue to try moving, and the remaining $\textit{num}-g\,(\ge g)$ agents terminate the algorithm execution there. 
	Thus, even after $A_f$ or $A_b$ leaves node $v_j$ where there originally exist less than $g$ agents, 
	it is guaranteed that (if any) $g$ agents exist at $v_j$.
	Moreover, since $A_f$ tries to move forward and 
	$A_b$ tries to move backward,
	$A_f$ or $A_b$ can visit a next node at each round. 
	Thus, within $n$ rounds, all nodes in the ring are visited by $A_f$ or $A_b$ and hence 
	there is no node where between 1 and $g-1$ agents exist.
	Thus, 	agents achieve $g$-partial gathering and the lemma follows. 
\end{proof}

We have the following theorem for the proposed algorithm.

\begin{theorem}
	\label{theo:smallK}
	When $3g-1\le k\le 8g-4$ holds, 
	the proposed algorithm solves the $g$-partial gathering problem in dynamic rings with 
	$O(n)$ rounds and 	the total number of $O(kn)$ moves. 
\end{theorem}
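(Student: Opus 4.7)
The plan is to piggyback on the correctness already established and then carefully account for the time and move budgets phase by phase. Correctness follows immediately: by Lemma \ref{lem:more2gSelection}, after the selection phase each agent has either traveled once around the ring and collected all IDs, or detected that all $k$ agents are co-located; in the latter case $g$-partial gathering is already achieved and the agents terminate. Otherwise, Lemma \ref{lem:more3gGathering} applied with the assumption $3g-1\le k\le 8g-4$ guarantees that the gathering phase terminates in a configuration where every occupied node contains at least $g$ agents.

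For the time complexity I would bound each loop separately. The selection phase is a single $3n$-round loop, so it takes $O(n)$ rounds. The gathering phase starts with another $3n$-round loop in which every agent only tries to move forward toward $v_\textit{gather}$. The subsequent calls to \textit{More2}, \textit{Less2}, and \textit{Moving2} each run for at most $n$ rounds, enforced explicitly by the guard $a_i.\textit{rounds}\le n$ in Algorithm \ref{algo:moving} and the corresponding termination on line 17. Since \textit{Less2} only exits when some forward or backward group arrives (or its own $n$-round cap is hit), it also contributes $O(n)$ rounds. Summing, the total running time is $O(n)$.

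For the move complexity the first two sub-phases are straightforward: each of the $k$ agents performs at most $3n$ attempted moves in the selection phase and at most $3n$ in the first block of the gathering phase, giving $O(kn)$ moves each. The crucial part is \textit{Moving2}. Here I would exploit the excess-trimming rule: whenever a traveling group $A_f$ or $A_b$ visits a node with fewer than $g$ waiting agents, the waiting agents join, but if the combined size $\textit{num}$ exceeds $2g$, the surplus $\textit{num}-g$ agents terminate at the current node (via the check at line 19 of Algorithm \ref{algo:moving} and line 9 of Algorithm \ref{algo:less}). Consequently the size of each traveling group is always at most $2g$, so each of the two groups contributes at most $2g$ agent-moves per round over at most $n$ rounds, yielding $O(gn)$ moves in \textit{Moving2}. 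Since $k = O(g)$ in the range $3g-1\le k\le 8g-4$, this is also $O(kn)$, and the total is $O(kn)$.

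The main obstacle I expect is making rigorous the claim that the size of each traveling group never exceeds $2g$. This requires a short invariant argument: initially $|A_f|=|A_b|=g$ at $v_\textit{more}$; whenever a group absorbs waiting agents at a node, the immediate surplus-termination step restores size exactly $g$ (or leaves the size at most $2g-1$ when the absorbed count is small), and because agents travel strictly forward or strictly backward, no two absorption events overlap on the same node. Once this invariant is in place, the overall $O(gn)$ bound for \textit{Moving2}, and hence the $O(kn)$ bound for the whole algorithm, follows directly.
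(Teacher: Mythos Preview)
Your proof is correct and follows essentially the same route as the paper: correctness via Lemmas~\ref{lem:more2gSelection} and~\ref{lem:more3gGathering}, then a phase-by-phase accounting that charges $O(kn)$ moves to each of the two $3n$-round loops and $O(gn)$ to the group-movement stage by bounding group size by $2g$. Two small inaccuracies to clean up: initially $|A_b|$ can be as large as $2g-1$ (not exactly $g$) when $v_\textit{more}$ has fewer than $3g$ agents, and \textit{Less2}() has no explicit $n$-round cap---its $O(n)$ termination comes solely from the guaranteed arrival of a group within $n$ rounds---but neither affects your bounds.
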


\begin{proof}
	By Lemmas \ref{lem:more2gSelection} and \ref{lem:more3gGathering},
	the proposed algorithm can solve the $g$-partial gathering problem. 
	In the following, we analyze the time complexity and the required total number of agent moves. 
	
	First, in the selection phase, each agent tries to move forward for $3n$ rounds in order to 
	determine the gathering node $v_\textit{gather}$.
	This requires $O(n)$ rounds and the total number of $O(kn)$ moves. 
	Next, in the gathering phase,  for $3n$ rounds, 
	each agent first tries to move forward to reach $v_\textit{gather}$,
	which requires $O(n)$ rounds and the total number of $O(kn)$ moves. 
	Thereafter, when at least $2g$ agents exist at some node, 
	from there a forward agent group $A_f$ and a backward agent group $A_b$ move in the ring to achieve $g$-partial gathering. 
	As discussed in the proof of Lemma \ref{lem:more3gGathering},
	a forward group or a backward group can visit a next node at each round, and thus 
	the movement for agent groups finishes within $n$ rounds. 
	In addition, since the number of agents of $A_f$ and that of $A_b$  are respectively at most $2g$ and 
	$A_f$ and $A_b$ visit $n, n+1,$ or $n+2$ nodes in total,
	each link is passed by at most $4g$ times during the movement for agent groups. 
	Thus, the total number of agent moves in the movement for the agent groups is $O(gn)$.
	Therefore, agents achieve $g$-partial gathering with $O(n)$ rounds and the total number of $O(kn)$ moves, and the theorem follows. 
\end{proof}

\section{The case of $k\ge 8g-3$}
\label{sec:more8g}
In this section, when $k\ge 8g-3$, 
we propose an algorithm to solve the problem with $O(n)$ rounds and the total number of $O(gn)$ (i.e., optimal) moves.
Since the move complexity  is not $O(kn)$ but $O(gn)$, 
it is not possible that all agents try to travel once around the ring as in Sections \ref{sec:more2g} and \ref{sec:more3g}.
Hence, in this section agents aim to reduce the total number of moves using distinct IDs and the fact of $k\ge 8g-3$.
The algorithm comprises three phases:
the semi-selection phase, the semi-gathering phase, and the achievement phase. 
In the semi-selection phase, 
agents select a set of \textit{gathering-candidate nodes} each of where 
at least $2g$ agents may gather.
In the semi-gathering phase, 
agents try to stay at a gathering-candidate node.
As a result,  at least $2g$ agents gather at some node 
(the node may not be a gathering-candidate node due to link-missings).
In the achievement phase, 
agents achieve $g$-partial gathering by the same method as that for the gathering phase in Section \ref{sec:more3g}.

\subsection{Semi-selection phase}
\label{sec:semiSelection}

The aim of this part is to select a set of gathering-candidate nodes each of where 
at least $2g$ agents may gather.
A possible approach is that 
each agent $a_i$ moves forward and backward for getting IDs of its 1-st, 2-nd, $\ldots , (2g-1)$-st forward agents
and IDs of its 1-st, 2-nd, $\ldots , (2g-1)$-st backward agents, and then 
returns to its initial node. 
Here, the $i$-th ($i\neq 0$) 
forward (resp., backward)
agent ${a'}$ of  agent $a$ represents the agent such that $i-1$ agents exist between $a$ and $a'$ 
in ${a}$'s forward (resp., backward)
direction in the initial configuration. 
Thereafter, $a_i$ compares its ID and the obtained $4g-2$ IDs.  
If its ID is the minimum, 
$a_i$ selects its initial node as a gathering-candidate node $v_\textit{candi}$. 
Then, the $2g-1$ agents existing in $a_i$'s backward direction try to move forward  to stay at $v_\textit{candi}$ and 
eventually $2g$ agents may gather at $v_\textit{gather}$.
However,  since we consider 1-interval connected rings, 
there are two problems: 
(1) it is possible that no gathering-candidate node is selected since
some agent may not be able to collect $4g-2$ IDs due to link-missings, and 
(2) even if a gathering-candidate node $v_\textit{candi}$ is selected, 
it is possible that some agent cannot reach $v_\textit{candi}$ due to link-missings and 
only less than $2g$ agents gather at each node. 

To treat  these problems, each agent $a_i$ in this phase keeps trying to move forward, 
tries  to observe more than $4g-2$ IDs, and considers some observed ID as its own ID when it observed the necessary number of IDs. 
Concretely, for $3n$ rounds, 
each agent $a_i$ tries to move forward until it observes $10g-4$ IDs or 
at least $2g$ agents exist at the current node. 
Thereafter,  $a_i$ determines its behavior depending on whether it observed at least  $8g-3$ IDs or not.
If $a_i$ did not observe at least $8g-3$ IDs, 
we show in Lemma \ref{lem:semi-selection} that 
at least $2g$ agents exist at some node $v_j$ and then
a flag $v_j.\textit{candi}$ is set to true to represent that 
$v_j$ is a gathering-candidate node (problem (1) is solved). 
Intuitively, this is because 
$a_i$ does not observe at least $(10g-4) - (8g-4) = 2g$ IDs and this means that 
at least $2g-1$ agents existing in $a_i$'s backward direction also do not observe the necessary number of IDs and 
they eventually stay at $a_i$'s node. 

On the other hand, if $a_i$ observed at least  $8g-3$ IDs,
it uses the first $8g-3$ IDs for comparison and 
considers the $(4g-1)$-st ID as its own ID.   
Then, this situation is similar to one that 
$a_i$ compares its ID with $4g-2$ forward IDs and $4g-2$ backward IDs.
Hence, if the $(4g-1)$-st ID is the minimum among the $8g-3$ IDs,
$a_i$ sets  $v_j.\textit{\textit{candi}} = \textit{true}$ at the current node $v_j$. 
Then, since $k\ge 8g-3$ holds, 
all the $8g-3$ IDs are distinct and thus 
$4g-2$ agents existing in $a_i$'s backward direction can 
recognize $a_i$'s staying node as the nearest gathering-candidate node $v_\textit{candi}$ in the forward direction
when they observed at least $8g-3$ IDs. 
Thus, the $4g-1$ agents in total ($a_i$ and the $4g-2$ agents) try to move forward and stay at $v_\textit{candi}$
(the detail is explained in the next subsection).
Then, when some link continues to be missing,
the $4g-1$ agents are partitioned into two groups and 
at least one group has $2g$ agents (problem (2) is solved).

The pseudocode of the semi-selection phase is described in Algorithm \ref{algo:semiSelection}. 
Concerning the semi-selection phase, we have the following lemma. 

\begin{algorithm}[t!]
	\caption{The behavior of agent $a_i$ in the semi-selection phase 
		($v_j$ is the current node of $a_i$.)}
	\label{algo:semiSelection}

	\Main{}
	$v_j.\textit{id}:=a_i.\textit{id}$, $a_i.\textit{ids}[a_i.\textit{nIDs}] := v_j.\textit{id}$\;
	$a_i.\textit{nIDs} := a_i.\textit{nIDs}+1$, $v_j.\textit{nAgents} := v_j.\textit{nAgents} + 1$\;
	
	
	\While {$a_i.\textit{rounds} \le 3n$}
	{		
		\If{$(a_i.\textit{nIDs} < 10g-4) \land (v_j.\textit{nAgents} < 2g)$} 
		{
			$v_j.\textit{nAgents} := v_j.\textit{nAgents} - 1$\;
			Try to move from the  current node  $v_j$ to the forward node $v_{j+1}$\;
			\If{$(a_i$ reached $v_{j+1}$ (that becomes new $v_j$)) $\land$ $(v_j.\textit{id} \neq \perp) $}
			{
				$a_i.\textit{ids}[a_i.\textit{nIDs}] := v_j.\textit{id}$, $a_i.\textit{nIDs} := a_i.\textit{nIDs}+1$\;
			}
			$v_j.\textit{nAgents} := v_j.\textit{nAgents} + 1$,  $a_i.\textit{rounds} := a_i.\textit{rounds}+1$\;
		}
	}
	
	\If{$(v_j.\textit{nAgents}\ge 2g) \lor	((a_i.\textit{nIDs} \ge 8g-3) \land 		 
		(\forall h\in [0,8g-2]\setminus \{4g-2\}; a_i.\textit{ids}[4g-2]<a_i.\textit{id}[h]))$}
	{
		$v_j.\textit{candi} := \textit{true}$\;
		Terminate the semi-selection phase and enter the semi-gathering phase \;
	}

\end{algorithm}

\begin{lemma}
	\label{lem:semi-selection}
	After finishing  Algorithm \ref{algo:semiSelection}, 
	there exists at least one node $v_j$ with $v_j.\textit{candi} = \textit{true}$.
	
\end{lemma}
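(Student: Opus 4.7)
The plan is to argue by contradiction: assume no node ends with $v_j.\textit{candi}=\textit{true}$, and exhibit an agent that must have set the flag. The argument splits on whether every agent $a_i$ ends with $a_i.\textit{nIDs}\ge 8g-3$.

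In the first case, I would let $a^*$ be the globally-minimum-ID agent and let $b^*$ be the agent whose initial position lies $4g-2$ agent-positions backward from $a^*$ in ring order. The key clean observation is that an agent's $\textit{ids}[\,]$ array is populated strictly in the order of forward agent-starting-nodes encountered along its forward walk --- missing links only stall an agent, they do not permute the sequence of IDs it subsequently observes --- so the entry $b^*.\textit{ids}[4g-2]$ equals $a^*.\textit{id}$, which is the strict minimum of $b^*.\textit{ids}[0\mathbin{..}8g-4]$. Hence $b^*$ satisfies the second disjunct of the setting line of Algorithm~\ref{algo:semiSelection} and sets its current node's candi flag, contradicting the hypothesis.

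In the second case, some agent $a_i$ has $a_i.\textit{nIDs}<8g-3$. I will show that $\ge 2g$ agents must actually coincide at $a_i$'s current node at some round, so that the first disjunct fires at $a_i$'s node and again contradicts the hypothesis. First, since no node ever reached $2g$ agents (otherwise the first disjunct would already have fired and the hypothesis would fail), $a_i$ never stopped on its own and so tried to move in every round; since $a_i.\textit{nIDs}<8g-3$ means $a_i$ did not reach its $(8g-4)$-th forward agent in the ring and this distance is less than $n$, $a_i$ was blocked strictly more than $2n$ times. The mechanism producing the pile-up is cascading: in each round $a_i$ is blocked the only missing link is the one in front of $a_i$'s current node, so every other agent advances; in particular the nearest unmerged backward agent of $a_i$ moves one step closer each such round, and after enough blocking these backward agents funnel onto $a_i$'s node.

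The main obstacle is making the pile-up argument in the second case quantitatively tight, because the adversary can try to forestall the coalescence either by occasionally blocking a backward agent $c_j$ of $a_i$ (which instead risks a symmetric pile-up at $c_j$'s node) or by letting $a_i$ move forward for a while (pushing the cascade to a fresh node). To close this, I would set up a potential-function argument tracking the ring distances from $a_i$ to each of its first $2g-1$ backward agents $c_1,\ldots,c_{2g-1}$ --- each such distance changes by $\pm1$ per round according to which link the adversary removes --- together with analogous potentials at the nodes of the $c_j$'s, use the no-pile-up invariant to bound the adversary's per-round block count by $2g-1$ (hence total block count by $3n(2g-1)$), and exploit that the lower bound $k\ge 8g-3$ forces the initial sum $\sum_{j=1}^{2g-1}g_j(0)$ of these distances to be controllable. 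Summing the induced constraints over all the relevant potentials shows that the adversary's budget is insufficient to keep $a_i$ blocked more than $2n$ times while simultaneously preserving every no-pile-up invariant; this is where the specific constant $8g-3$ becomes critical and is the bulk of the proof.
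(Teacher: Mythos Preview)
Your first case is correct and matches the paper's case~(b). Your second case, however, is overcomplicated and has a real gap. The obstacles you name --- the adversary occasionally blocking a backward agent $c_j$, or letting $a_i$ advance --- are already absorbed by the plain Lemma~\ref{lem:more2gSelection} distance count, and parts of your sketch are confused (only one link is missing per round, so agents are blocked at a single location, not up to $2g-1$; and the bound $k\ge 8g-3$ says nothing about initial inter-agent distances). The genuine gap is that you never explain why the backward agents $c_1,\dots,c_{2g-1}$ keep trying to move through all $3n$ rounds rather than halting because they reach the ID threshold $c_j.\textit{nIDs}=10g-4$; if some $c_j$ halts, neither your potential nor any distance count forces it to reach $a_i$.

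The paper handles this with a one-line non-overtaking observation: since every agent moves only forward, $c_j$ can never pass $a_i$, hence $c_j.\textit{nIDs}\le j+a_i.\textit{nIDs}\le (2g-1)+(8g-4)=10g-5<10g-4$, so under the no-pile-up hypothesis each $c_j$ is still trying to move in every round. Then the pairwise Lemma~\ref{lem:more2gSelection} argument applies verbatim to each pair $(c_j,a_i)$: in the $\ge 2n+1$ rounds where $a_i$ is blocked, $c_j$ advances and the distance drops by~$1$; in the $\le n-1$ rounds where $a_i$ moves, the distance rises by at most~$1$; so the distance, initially at most $n-1$, reaches~$0$. This places $2g$ agents at $a_i$'s node and yields the contradiction. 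The constants in Algorithm~\ref{algo:semiSelection} are engineered for exactly this step: the gap $(10g-4)-(8g-4)=2g$ between the stopping threshold and the worst-case $a_i.\textit{nIDs}$ is what guarantees the first $2g-1$ backward agents cannot halt on IDs. No potential-function accounting is needed.
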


\begin{proof}
	Let $a_\textit{min}$ be the agent with minimum ID among all agents and 
	$a_i$ be the $(4g-2)$-nd backward agent of $a_\textit{min}$.
	We consider the cases that 
	the value of $a_i.\textit{nIDs}$ after executing Algorithm \ref{algo:semiSelection} is 
	(a) less than $8g-3$ and (b) at least $8g-3$  in this order. 
	First, (a) if $a_i.\textit{nIDs} < 8g-3$ holds, 
	let $g' = (10g-4) -a_i.\textit{nIDs}$ be the number of IDs that $a_i$ could not observe and 
	$a_{i-1}, a_{i-2}, \ldots , a_{i-(g'-1)}$ be the 1-st, 2-nd, $\ldots$ , $(g'-1)$-st backward agents of $a_i$.
	Then, since $(g'-1) + a_i.\textit{nIDs} = ((10g-4)-a_i.\textit{nIDs}) - 1 + a_i.\textit{nIDs} = 10g-5 < 10g-4$,
	agent $a_{i-(g'-1)}$ does not observe the required number $10g-4$ of IDs. 
	Thus, $a_{i-1}, a_{i-2}, \ldots , a_{i-(g'-1)}$ also observed less than $10g -4$ IDs and 
	they stay at the same node ($a_i'$s node) by the similar discussion of Lemma \ref{lem:more2gSelection}.
	Since $g'-1\ge (10g-4) -(8g-4)-1 = 2g-1$ holds, at least $2g$ agents (including $a_i$) stay at the same node $v_j$ and thus $v_j.\textit{candi}$ is set to true.	
	Next, (b) if $a_i.\textit{nIDs} \ge 8g-3$ holds, 
	$a_i$ recognizes that $a_\textit{min}$'s ID is its own ID and the ID is the minimum among the $8g-3$ IDs.
	Hence, $a_i$ sets $v_j.\textit{candi} = true$ at the current node $v_j$.
	Therefore, the lemma follows. 	
\end{proof}

\subsection{Semi-gathering phase}
In this phase, agents aim to make a configuration such that 
at least $2g$ agents exist at some node.
By Lemma \ref{lem:semi-selection},
there exists at least one gathering-candidate node $v_j$ with $v_j.\textit{candi} = \textit{true}$
at the end of the semi-selection phase.
In the following, we call such a candidate node $v_\textit{candi}$.
Then, if less than $2g$ agents exist at $v_\textit{candi}$, 
$4g-2$ agents in total that already stay at $v_\textit{candi}$ and exist in $v_\textit{candi}$'s backward direction try to stay at $v_\textit{candi}$.
Concretely, in this phase, for $3n$ rounds 
each agent tries to move forward until it stays $v_\textit{candi}$ or 
at least $2g$ agents exist at the current node. 
Then, due to link-missings, it is possible that 
only less than $2g$ agents gather at $v_\textit{candi}$ 
after the movement.
In this case, we can show by the similar discussion of Lemma \ref{lem:more2gSelection} that 
all the agents that do not reach $v_\textit{candi}$ 
among the $4g-2$ agents stay at the same node. 
Then, the $4g-1$ agents (the $4g-2$ agents and the agent originally staying at $v_\textit{candi}$) are partitioned into two groups and 
at least one group has at least $2g$ agents in any partition.
Thus, agents can make a configuration such that at least $2g$ agents exist at some node. 

The pseudocode of the semi-gathering phase is described in Algorithm \ref{algo:semiGathering}.
Note that, during the movement, when agents are blocked few times and they do not reach $v_\textit{candi}$, 
agents may require the total number of  more than $O(gn)$ moves.
To avoid this, each agent stops moving when 
it observed $4g-1$ IDs even if it does not reach $v_\textit{candi}$ (line 2).

\begin{algorithm}[t!]
	\caption{The behavior of agent $a_i$ in the semi-gathering phase 
		($v_j$ is the current node of $a_i$.)}
	\label{algo:semiGathering}       
	
	\Main{}
	$a_i.\textit{rounds} := 1, a_i.\textit{nIDs} := 1$\;
	\While{($a_i.\textit{rounds} \le 3n$) $\land$ ($a_i.\textit{nIDs}\neq 4g-1)$}
	{
		\If{$v_j.\textit{candi} = \textit{false}$}
		{
			$v_j.\textit{nAgents} := v_j.\textit{nAgents} -1$\;
			Try to move from the current node $v_j$ to the forward node $v_{j+1}$\;
			\textbf{if} ($a_i$ reached $v_{j+1}$ (that becomes new $v_{j}$)) $\land$ ($v_j.\textit{id}\neq\perp$) \textbf{then} $a_i.\textit{nIDs} := a_i.\textit{nIDs} +1$\;
			
			$v_j.\textit{nAgents} := v_j.\textit{nAgents} + 1$\;
			\textbf{if} $v_j.\textit{nAgents}\ge 2g$ \textbf{then} $v_j.\textit{candi}:=\textit{true}$\;
			
		}
		$a_i.\textit{rounds} = a_i.\textit{rounds} +1$\;
	}
	Terminate the semi-gathering phase and enter the achievement phase\;
	
\end{algorithm}

Concerning the semi-gathering phase, we have the following lemma.

\begin{lemma}
	\label{lem:semi-gathering}
	After finishing the semi-gathering phase,
	there exists at one node $v_j$ with $v_j.\textit{nAgents}\ge 2g$.
\end{lemma}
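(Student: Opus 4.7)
The plan is to isolate a group of $4g-1$ target agents — the agent $a_0$ that marked $v_\textit{candi}$ by the ID-minimum rule in the semi-selection phase, together with the $4g-2$ agents whose initial positions lie consecutively backward of $v_\textit{candi}$ — and to show that after the $3n$ rounds of the semi-gathering phase these agents are split between at most two co-located groups, one of which must therefore contain at least $\lceil (4g-1)/2\rceil = 2g$ agents.

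First, by Lemma~\ref{lem:semi-selection} I fix a node $v_\textit{candi}$ whose candi flag is true at the beginning of the phase. If the flag was set because $v_\textit{candi}.\textit{nAgents}\ge 2g$, the lemma holds immediately, since no agent ever leaves a candidate node (the guard $v_j.\textit{candi}=\textit{false}$ on line 3 of Algorithm~\ref{algo:semiGathering}). Otherwise the flag was set by the ID-minimum clause of Algorithm~\ref{algo:semiSelection}, and I single out the $4g-1$ target agents described above. Next, I would verify that the stopping criterion $\textit{nIDs}\ne 4g-1$ never halts a target agent $a_{-k}$ prematurely: since $a_{-k}$ crosses exactly $k$ agent starting positions en route to $v_\textit{candi}$, at the moment of arrival $\textit{nIDs}=k+1\le 4g-1$, and if $a_{-k}$ fails to reach $v_\textit{candi}$ it has crossed strictly fewer starting positions so $\textit{nIDs}<4g-1$. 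A target agent therefore stops only upon arrival at some candidate node; if that node is not $v_\textit{candi}$ itself, then its flag must have been set during the current phase by the $\textit{nAgents}\ge 2g$ clause (line 8), which already discharges the lemma.

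The central step is to adapt the distance-shrinking argument of Lemma~\ref{lem:more2gSelection} to show that any two target agents $a_p,a_q$ that fail to reach a candidate node must finish at the same node. Each such agent makes fewer than $n$ successful forward moves in $3n$ rounds, hence is blocked at least $2n+2$ times. As long as $a_p$ and $a_q$ occupy distinct nodes, they attempt to cross distinct links, so at most one of them can be blocked per round and their joint blockings total at most $3n$; this contradicts $2(2n+2)>3n$. Hence the two agents must share a node at some round, and once they do they remain co-located, since they then both attempt the same forward link and are either both blocked or both advance. Consequently all stuck target agents reside at one common node $v'$, and the partition of the $4g-1$ target agents between $v_\textit{candi}$ and $v'$ yields the promised $2g$-agent node.

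The trickiest point, which I expect to be the main obstacle, is ruling out scenarios in which newly-created candidate nodes during the semi-gathering phase absorb target agents asymmetrically and thereby spoil the partition count. I would dispatch this by the observation already invoked above: any candidate arising during this phase is created only via the $\textit{nAgents}\ge 2g$ clause, so its very creation already witnesses the conclusion of the lemma and the proof terminates in that branch without the partition argument at all.
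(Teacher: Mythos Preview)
Your overall architecture matches the paper's proof exactly: isolate $4g-1$ ``target'' agents heading for $v_\textit{candi}$, invoke the distance-shrinking argument of Lemma~\ref{lem:more2gSelection} to force all stuck target agents onto a single node, and then pigeonhole. Your extra care in checking that the $\textit{nIDs}\ne 4g-1$ guard cannot halt a target agent short of $v_\textit{candi}$, and in dispatching freshly-created candidate nodes via the $\textit{nAgents}\ge 2g$ clause, goes beyond what the paper spells out and is a genuine improvement in rigor.

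There is, however, a slip in how you pick the $4g-2$ companion agents. You take the agents ``whose initial positions lie consecutively backward of $v_\textit{candi}$.'' But $v_\textit{candi}$ is where $a_0$ ended up \emph{after} the semi-selection phase, having already advanced past up to $10g-5$ ID-bearing nodes. The agents that \emph{started} at the $4g-2$ ID-nodes just behind $v_\textit{candi}$ have themselves advanced by roughly the same amount during semi-selection, so at the opening of the semi-gathering phase they are typically well \emph{forward} of $v_\textit{candi}$ and will never turn back toward it; your count ``$a_{-k}$ crosses exactly $k$ agent starting positions en route to $v_\textit{candi}$'' then has no content. The set you need---and the one the paper uses, albeit tersely---is the $4g-2$ agents that are backward of $v_\textit{candi}$ \emph{at the start of the semi-gathering phase}; equivalently, the agents that started the algorithm immediately behind $a_0$'s own initial node. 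With that correction your $\textit{nIDs}$ bookkeeping and the rest of the argument go through unchanged.
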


\begin{proof}
	We consider a configuration such that 
	there exists no node with at least $2g$ agents at the beginning of the semi-gathering phase. 
	By Lemma \ref{lem:semi-selection}, 
	there exists at least one node $v_j$ with $v_j.\textit{candi} = \textit{true}$,  and 
	$4g-2$ agents in total that already stay at $v_j$ and exist in $v_j$'s backward direction try to stay at $v_j$ by Algorithms \ref{algo:semiSelection} and \ref{algo:semiGathering}.
	Then, by the similar discussion of the proof of Lemma \ref{lem:more2gSelection},
	after executing Algorithm \ref{algo:semiGathering} for $3n$ rounds,
	all agents among the $4g-2$ agents that do not reach $v_j$ stay at the same node.
	Thus, the $4g-1$ agents (the $4g-2$ agents and the agent originally staying at $v_j$) are partitioned into two groups 
	and at least one group has at least $2g$ agents in any partition.
	Therefore, the lemma follows. 
\end{proof}

\subsection{Achievement phase}
In this phase, agents aim to achieve $g$-partial gathering.
By Lemma \ref{lem:semi-gathering}, there exists at least one node with at least $2g$ agents as in Section \ref{sec:more3g}.
The difference from Section \ref{sec:more3g} is that 
there may exist more than two nodes with agents and 
there may exist several nodes each of which has at least $2g$ agents. 
Also from this situation, agents can  achieve $g$-partial gathering  using the same method as that in Section \ref{sec:more3g}, that is, 
(1) agents staying at a node with at least $2g$ agents  are partitioned into a forward group and a backward group and
they try to move forward and backward respectively, and  
(2) when a forward group (resp., a backward group) visits a node with less than $g$ agents,
the less than $g$ agents join the forward group (resp.,  the backward group).

An example is given in Fig.\,\ref{fig:achievement}.
In Fig.\,\ref{fig:achievement} (a),
there exist two nodes $v_p$ and $v_q$ each of which has  $6\,(=2g)$ agents. 
Hence, a forward group $A_{f_p}$ and a backward group $A_{b_p}$ 
(resp., $A_{f_q}$ and $A_{b_q}$) start moving from node $v_p$
(resp., from node $v_q$).
From (a) to (b), 
$A_{b_p}$ reaches node $v_\ell$ with less than $g$ agents, and the less than $g$ agents join $A_{b_p}$ and 
try to move backward. 
From (b) to (e), we assume that $A_{f_q}$ continues to be blocked by a missing link.
From (b) to (c),
$A_{f_p}$ and $A_{b_q}$ crossed and they recognize the fact by the existence of  flags,
and they terminate the algorithm execution.
From (c) to (d),
$A_{b_p}$ reaches node $v_m$ with less than $g$ agents, and the less than $g$ agents join $A_{b_p}$ and 
try to move backward. 
Then, since the number \textit{num} of agents in the updated $A_{b_p}$ is $7\,(> 2g)$, by using their IDs,
only $g$ agents continue to try moving backward and the remaining $\textit{num}-g$ agents terminate the algorithm execution there
((d) to (e)). 
By  this behavior, during this phase each link is passed by at most $2g$ agents and 
the achievement phase can be achieved with the total number of $O(gn)$ moves. 
From (e) to (f), $A_{f_q}$ and $A_{b_p}$ reach some node simultaneously and recognize the fact by flags, 
and they terminate the algorithm execution
and agents achieve $g$-partial gathering. 

\begin{figure}[t!]
	\centering
	\includegraphics[scale=0.575]{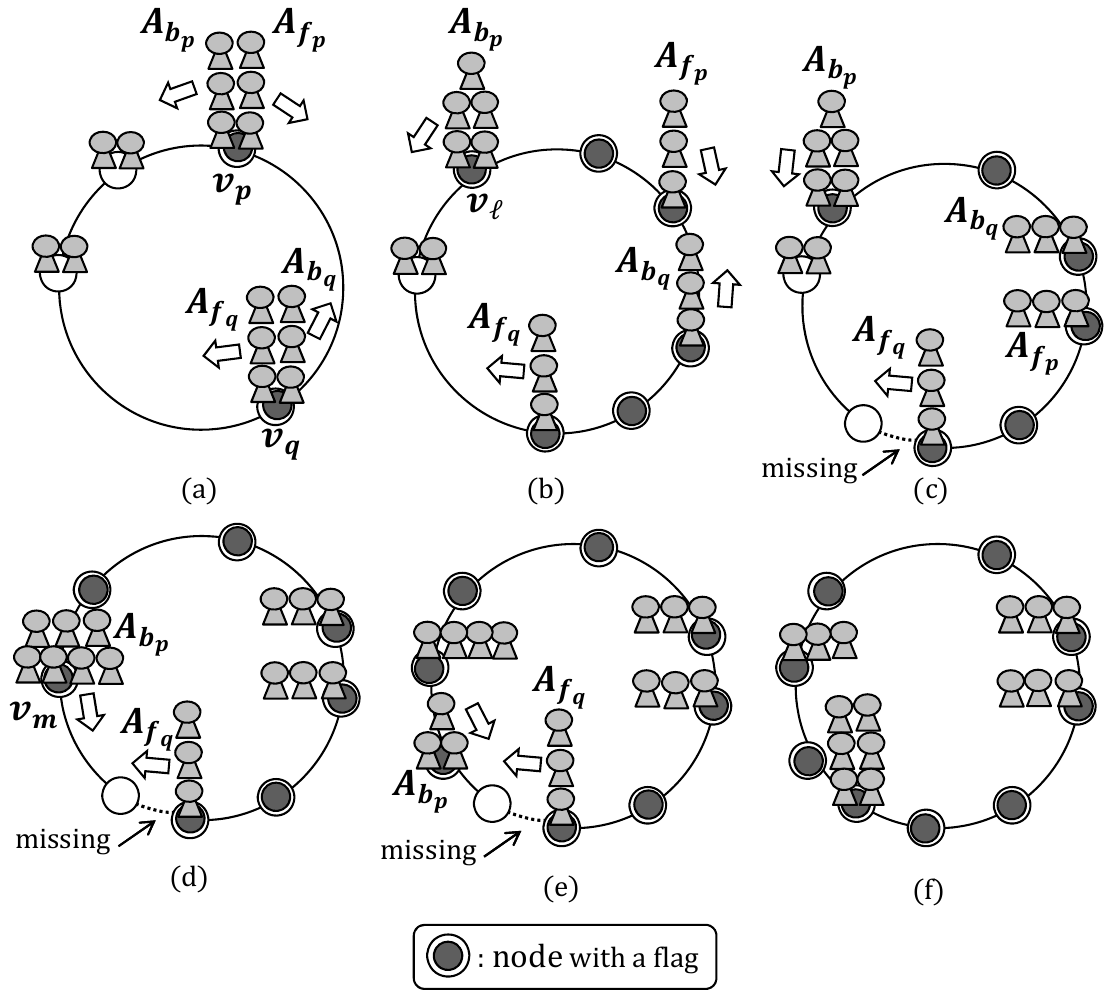}
	\caption{An execution example of the achievement phase when $k\ge 8g-3$ ($g=3$).}
	\label{fig:achievement}
\end{figure} 

Concerning the achievement phase, we have the following lemma.

\begin{lemma}
	\label{lem:achievement}
	After executing the achievement phase, agents achieve $g$-partial gathering. 
\end{lemma}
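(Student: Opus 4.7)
The plan is to extend the argument of Lemma~\ref{lem:more3gGathering} to the achievement phase, where by Lemma~\ref{lem:semi-gathering} possibly several nodes start with at least $2g$ agents, rather than just one. Let $V_{\textit{big}}$ denote this non-empty set; each $v\in V_{\textit{big}}$ spawns a forward group $A_f^v$ and a backward group $A_b^v$ via procedure \textit{More2}(), each of size at least $g$. When $v.\textit{nAgents}\ge 3g$ the excess $v.\textit{nAgents}-2g$ agents safely terminate at $v$ because they already meet the $g$-partial condition there; when $2g\le v.\textit{nAgents}<3g$ the disjunction in line 4 of Algorithm~\ref{algo:more} forces all agents to enlist in $A_f^v$ or $A_b^v$, leaving $v$ empty.

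Next I would establish an invariant maintained throughout all invocations of \textit{Moving2}(): every group still in motion carries at least $g$ agents, and each link is crossed by fewer than $2g$ agents in one passage. The first part is inductive: absorbing a small node with $c<g$ agents increases the group's size from $\ge g$ to $g+c$, and if the new size reaches $2g$ or more, the trimming rule of Algorithm~\ref{algo:moving} (lines 15--17) and Algorithm~\ref{algo:less} (lines 8--10) terminates exactly $v_j.\textit{nAgents}-g$ agents at the current node while $g$ agents continue. A direct corollary is that every node ever touched by some group ends with either $0$ agents or at least $g$ terminated agents, which is already the $g$-partial condition at every such node.

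The remaining obligation is to show that every small node (with $1\le \textit{nAgents}<g$ at the start of the phase) is visited by some group before all groups terminate. For any two cyclically adjacent nodes $u,w\in V_{\textit{big}}$, the arc between them is swept from the $u$-side by $A_b^u$ and from the $w$-side by $A_f^w$. Since the adversary removes at most one link per round, for any $t\le n$ rounds at most one of the two groups is blocked in any given round, so their joint advance along the arc after $t$ rounds is at least $t-1$ hops; within $n$ rounds they therefore collectively cover the entire arc, the meeting being detected via the flag pair $v_j.\textit{fMarked}$, $v_j.\textit{bMarked}$ (line 12 of Algorithm~\ref{algo:moving}, line 5 of Algorithm~\ref{algo:less}), while the backup termination after $n$ rounds (line 17 of Algorithm~\ref{algo:moving}) prevents any group from running forever. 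Either way, every small node on the arc is reached and absorbed before the sweep ends.

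The main obstacle will be ruling out the scenario in which the flag-based termination of a group fires prematurely and strands an unabsorbed small node behind it. I would dispose of this by observing that a forward group can only observe a $\textit{bMarked}$ flag on a node already visited by some backward group, so at the moment of such a termination the two corresponding sweeps have jointly covered the entire arc that separates them; analogously for a backward group meeting an $\textit{fMarked}$ flag, and for two groups meeting at the same node in the same round (lines 14 of Algorithm~\ref{algo:moving} and 5 of Algorithm~\ref{algo:less}). Combining the per-node invariant with this coverage argument, no node ends with strictly between $1$ and $g-1$ agents, so $g$-partial gathering is achieved at termination and the lemma follows.
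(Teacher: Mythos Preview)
Your argument is correct and follows the same strategy the paper intends: invoke Lemma~\ref{lem:semi-gathering} to obtain at least one node with $\ge 2g$ agents, then run the sweep mechanism of \textit{More2}()/\textit{Moving2}()/\textit{Less2}() exactly as in Lemma~\ref{lem:more3gGathering}. The paper's own proof is a two-line reduction to Lemma~\ref{lem:more3gGathering}; you instead unfold that reduction explicitly for the case $|V_{\textit{big}}|\ge 2$ via the arc-by-arc coverage argument, which is more informative but not a different route.

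One cosmetic remark: in your coverage paragraph the pair sweeping the arc between cyclically adjacent $u,w\in V_{\textit{big}}$ is the forward group from one endpoint and the backward group from the other, and which label goes where depends on the orientation you silently fix; it would read more clearly if you stated that orientation. Also, your joint-advance bound can be sharpened to $\ge t$ rather than $\ge t-1$, since the only round in which both groups are simultaneously blocked is when they sit on adjacent nodes with the link between them missing, and at that moment the arc is already fully covered. Neither point affects correctness.
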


\begin{proof}
	By Lemma \ref{lem:semi-gathering}, at the beginning of the achievement phase, 
	there exists at least one node with at least $2g$ agents. 
	Thus, the at least $2g$ agents can execute \textit{More2}() 
	(Procedure \ref{algo:more}) and start moving forward or backward. 
	For the rest, 
	we can show the correctness by the exact same discussion of the proof of Lemma \ref{lem:more3gGathering}.
	Therefore, the lemma follows. 
\end{proof}

We have the following theorem for the proposed algorithm.

\begin{theorem}
	\label{theo:largeK}
	When $k\ge 8g -3$ holds, 
	the proposed algorithm solves the $g$-partial gathering problem in dynamic rings with 
	$O(n)$ rounds and the total number of $O(gn)$ moves. 
\end{theorem}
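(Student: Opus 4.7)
The plan is to combine the three phase lemmas for correctness and then carefully account for the time and move complexity, with the only delicate part being the amortization that gives $O(gn)$ rather than $O(kn)$ moves. Correctness will be immediate: Lemma \ref{lem:semi-selection} guarantees that at least one node $v_j$ has $v_j.\textit{candi} = \textit{true}$ at the end of the semi-selection phase, Lemma \ref{lem:semi-gathering} upgrades this to a node with at least $2g$ agents, and Lemma \ref{lem:achievement} then yields $g$-partial gathering at the end of the achievement phase.

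For the time bound I would simply observe that each of the three phases has an outer loop of length $O(n)$: the semi-selection and semi-gathering phases each run a $3n$-round while-loop by construction, and the achievement phase completes within $O(n)$ rounds by the same argument used in the proof of Theorem \ref{theo:smallK} (a forward or backward group can advance at every round even with one missing link, so all $n$ nodes get marked within $n$ rounds of group movement). Summing the three phases gives the claimed $O(n)$ total.

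The heart of the argument is the $O(gn)$ move bound, which must be proved separately for each phase since $k$ can be arbitrarily larger than $g$. For the semi-selection phase, the key observation is that each agent stops moving as soon as it has observed $10g-4$ distinct IDs (or sees $2g$ agents at one node). Hence agent $a_i$ traverses at most $d_i$ links, where $d_i$ is the distance from $a_i$'s initial position to that of its $(10g-4)$-th forward agent. Summing over all $i$ and using the fact that each of the $k$ initial inter-agent arcs is counted exactly $10g-4$ times yields $\sum_i d_i = (10g-4)\cdot n = O(gn)$. The semi-gathering phase admits the same amortization with the constant $4g-1$ in place of $10g-4$, so it also contributes $O(gn)$ moves.

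For the achievement phase I plan to reuse the analysis from the proof of Theorem \ref{theo:smallK}: each forward or backward group has size capped at $2g$ because, whenever a joining merge would push a group above $2g$, the surplus agents terminate in place (line 18 of \textit{Moving2}()), and each group stops as soon as it reaches a node already marked by an opposite-direction flag. Consequently every link is crossed by at most $O(g)$ agents during this phase, giving $O(gn)$ total moves. The main obstacle, compared to Section \ref{sec:more3g}, is that now multiple gathering-candidate nodes may coexist, so multiple forward/backward group pairs can march simultaneously; I expect the hardest step to be verifying that the \textit{fMarked}/\textit{bMarked} flag-based termination still guarantees that each link is traversed by only a constant number of $2g$-capped groups, so that the total move count remains $O(gn)$ rather than blowing up with the number of candidate nodes. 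Combining the three bounds then gives the claimed $O(gn)$ overall.
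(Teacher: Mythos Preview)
Your proposal is correct and follows essentially the same route as the paper's proof: correctness via Lemmas~\ref{lem:semi-selection}--\ref{lem:achievement}, $O(n)$ time from the three $O(n)$-round loops, and $O(gn)$ moves by bounding link traversals to $O(g)$ per link in each phase. Your amortization for the first two phases (summing the $d_i$'s to $(10g-4)n$ and $(4g-1)n$) is just a rephrasing of the paper's ``each link is passed at most $10g{-}3$ (resp.\ $4g{-}1$) times,'' and your achievement-phase bound matches the paper's ``each link is passed by at most $4g$ times.''

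The concern you flag about multiple candidate nodes is legitimate and, in fact, the paper's own proof glosses over it with the bare assertion above. The resolution is exactly what you suspect: because a forward group halts at the first node with $\textit{bMarked}=\textit{true}$ and a backward group halts at the first node with $\textit{fMarked}=\textit{true}$, the arcs swept by distinct forward groups are pairwise disjoint (likewise for backward groups), so any given link is crossed by at most one forward group and one backward group, each of size at most $2g$. That is the missing sentence; once stated, the $O(gn)$ bound is immediate and your proof is complete.
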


\begin{proof}
	By Lemmas \ref{lem:semi-selection}, \ref{lem:semi-gathering}, and \ref{lem:achievement},
	the proposed algorithm can solve the $g$-partial gathering problem. 
	In the following, we analyze the time complexity and the required total number of agent moves. 
	
	First, in the semi-selection phase, for $3n$ rounds
	each agent $a_i$ tries to move forward until it observes $10g -3$ IDs or at least $2g$ agents exist at the current node,
	which requires $O(n)$ rounds. 
	In addition, since $a_i$ stops moving when it observes $10g -3$ IDs, 
	in this phase each link is passed by at most $10g-3$ times.
	Hence, the move complexity of the semi-selection phase is $O(gn)$.
	
	Next, in the semi-gathering phase, for $3n$ rounds 
	each agent at a node with less than $2g$ agents tries to move forward until 
	it stays at a gathering-candidate node $v_\textit{candi}$, a node with at least $2g$ agents, or it observes $4g-1$ IDs, which requires $O(n)$ rounds.
	In addition, since each link is passed by at most $4g-1$ times, 
	the move complexity of the semi-gathering phase is $O(gn)$.
	
	Finally, in the achievement phase, 
	agents staying at a node with at least $2g$ agents move for achieving $g$-partial gathering.
	As discussed in Lemmas \ref{lem:more3gGathering} and \ref{lem:achievement},
	a forward group of a backward group can visit a next node at each round, and hence
	the achievement phase finishes within $n$ rounds. 
	In addition, each forward or backward group comprises at most $2g$ agents, and hence 
	each link is passed by at most $4g$ times.
	Thus, the move complexity of the achievement phase is $O(gn)$.
	Therefore, agents can solve the $g$-partial gathering problem with $O(n)$ rounds and the total number of $O(gn)$ moves,
	and the theorem follows. 
\end{proof}

\section{Conclusion} 
In this paper, we considered the $g$-partial gathering problem in bidirectional dynamic rings and 
considered the solvability of the problem and the time and move complexity, 
focusing on the relationship between values of $k$ and $g$. 
First,  we showed that agents cannot solve the problem when $k \le 2g$.
Second, we showed that  the problem can be solved   with 
$O(n\log g)$ time and the total number of $O(gn\log g)$ moves when $2g+1\le k \le 3g-2$.
Third, 
we showed that the problem can be solved with $O(n)$ time and the total number of $O(kn)\,(=O(gn))$ moves
when $ 3g-1\le k\le 8g-4$.
Finally,
we showed that the problem can be solved with $O(n)$ time and the total number of $O(gn)$ moves  when $k\ge 8g - 3$.
These results show that the partial gathering problem can be solved also in dynamic rings when $k\ge 2g+1$.
In particular, when $k\ge 3g-1$, the time complexity $O(n)$ of the proposed algorithms is asymptotically optimal and 
the total number $O(gn)$ moves is also asymptotically optimal. 

Future works are as follows.
First, we consider whether or not agents can solve the problem with the total number of moves smaller than $O(gn\log g)$ in the case of $2g+1\le k\le 3g-2$.
Also, we will consider the solvability of the problem for agents with weaker capabilities, e.g., agents without distinct IDs, without chirality, or 
agents that behave semi-synchronously or asynchronously. 
In any of the above cases, we conjecture that 
agents cannot solve the problem or require more total number of moves than the proposed algorithms. 

\bibliographystyle{unsrt}
\bibliography{refP}   

\end{document}